\renewcommand{\phi}{\varphi}
\newcommand{\ab}[1]{\langle#1\rangle}
\renewcommand{\iff}{\text{iff}}
\newcommand{\Lra}{\Leftrightarrow}
\newcommand{\lra}{\leftrightarrow}
\newcommand{\mcp}{\mathcal{P}}
\newcommand{\msc}{\mathscr{C}}
\newcommand{\pr}{\textsc{prop}\xspace}
\newcommand{\qiff}{\quad\iff\quad}
\newcommand{\Ra}{\Rightarrow}
\newcommand{\ra}{\rightarrow}
\newcommand{\MCS}{\text{MCS}}
\newcommand{\modelEx}{\ensuremath{(S,\rel,V)}}
\newcommand{\sig}{\ensuremath{(\logic,\ax,\alpha,\iota)}\xspace}
\newcommand{\tail}{\mathsf{tail}}
\newcommand{\pointedmodel}{\ensuremath{(M,s)}\xspace}
\newcommand{\rel}{\ensuremath{R}}
\newcommand{\cdomain}{\ensuremath{\mathsf{S}}\xspace}
\newcommand{\cmodel}{\ensuremath{\mathsf{M}}\xspace}
\newcommand{\crelation}{\ensuremath{\mathrel{\mathsf{R}}}\xspace}
\newcommand{\cvaluation}{\ensuremath{\mathsf{V}}\xspace}
\newcommand{\lang}{\ensuremath{\mathcal{L}}\xspace}
\newcommand{\langcd}{\ensuremath{\mathcal{L}^{\cap\Union}}\xspace}
\newcommand{\langd}{\ensuremath{\mathcal{L}^\cap}\xspace}
\newcommand{\Int}{\ensuremath{\cap}\xspace}
\newcommand{\Union}{\ensuremath{\uplus}\xspace}
\newcommand{\Unionplus}{\text{Union$^+$}\xspace}
\newcommand{\unionplus}{\text{union$^+$}\xspace}
\newcommand{\ax}{\textbf{L}\xspace}
\newcommand{\axk}{\textbf{K}\xspace}
\newcommand{\axd}{\textbf{D}\xspace}
\newcommand{\axt}{\textbf{T}\xspace}
\newcommand{\axsfour}{\textbf{S4}\xspace}
\newcommand{\axb}{\textbf{B}\xspace}
\newcommand{\axsfive}{\textbf{S5}\xspace}
\newcommand{\axcdk}{\ensuremath{{\textbf{K}^{\cap\Union}}}\xspace}
\newcommand{\axcdd}{\ensuremath{{\textbf{D}^{\cap\Union}}}\xspace}
\newcommand{\axcdt}{\ensuremath{{\textbf{T}^{\cap\Union}}}\xspace}
\newcommand{\axcdsfour}{\ensuremath{{\textbf{S4}^{\cap\Union}}}\xspace}
\newcommand{\axcdb}{\ensuremath{{\textbf{B}^{\cap\Union}}}\xspace}
\newcommand{\axcdsfive}{\ensuremath{{\textbf{S5}^{\cap\Union}}}\xspace}
\newcommand{\axdk}{\ensuremath{{\textbf{K}^\cap}}\xspace}
\newcommand{\axdd}{\ensuremath{{\textbf{D}^\cap}}\xspace}
\newcommand{\axdt}{\ensuremath{{\textbf{T}^\cap}}\xspace}
\newcommand{\axdsfour}{\ensuremath{{\textbf{S4}^\cap}}\xspace}
\newcommand{\axdb}{\ensuremath{{\textbf{B}^\cap}}\xspace}
\newcommand{\axdsfive}{\ensuremath{{\textbf{S5}^\cap}}\xspace}
\newcommand{\ck}{\textbf{Un}\xspace}
\newcommand{\ckk}{\textbf{Un(\axk)}\xspace}
\newcommand{\ckd}{\textbf{Un(\axd)}\xspace}
\newcommand{\ckt}{\textbf{Un(\axt)}\xspace}
\newcommand{\cksfour}{\textbf{Un(\axsfour)}\xspace}
\newcommand{\ckb}{\textbf{Un(\axb)}\xspace}
\newcommand{\cksfive}{\textbf{Un(\axsfive)}\xspace}
\newcommand{\dkk}{\textbf{Int(\axk)}\xspace}
\newcommand{\dkd}{\textbf{Int(\axd)}\xspace}
\newcommand{\dkt}{\textbf{Int(\axt)}\xspace}
\newcommand{\dksfour}{\textbf{Int(\axsfour)}\xspace}
\newcommand{\dkb}{\textbf{Int(\axb)}\xspace}
\newcommand{\dksfive}{\textbf{Int(\axsfive)}\xspace}
\newcommand{\lk}{\text{K}\xspace}
\newcommand{\ld}{\text{D}\xspace}
\newcommand{\lt}{\text{T}\xspace}
\newcommand{\lsfour}{\text{S4}\xspace}
\newcommand{\lb}{\text{B}\xspace}
\newcommand{\lsfive}{\text{S5}\xspace}
\newcommand{\logic}{\text{L}\xspace}
\newcommand{\lcdk}{\ensuremath{\text{K}^{\cap\Union}}\xspace}
\newcommand{\lcdd}{\ensuremath{\text{D}^{\cap\Union}}\xspace}
\newcommand{\lcdt}{\ensuremath{\text{T}^{\cap\Union}}\xspace}
\newcommand{\lcdsfour}{\ensuremath{\text{S4}^{\cap\Union}}\xspace}
\newcommand{\lcdb}{\ensuremath{\text{B}^{\cap\Union}}\xspace}
\newcommand{\lcdsfive}{\ensuremath{\text{S5}^{\cap\Union}}\xspace}
\newcommand{\ldk}{{\ensuremath{\text{K}^{\cap}}}\xspace}
\newcommand{\ldd}{{\ensuremath{\text{D}^{\cap}}}\xspace}
\newcommand{\ldt}{{\ensuremath{\text{T}^{\cap}}}\xspace}
\newcommand{\ldsfour}{{\ensuremath{\text{S4}^{\cap}}}\xspace}
\newcommand{\ldb}{{\ensuremath{\text{B}^{\cap}}}\xspace}
\newcommand{\ldsfive}{{\ensuremath{\text{S5}^{\cap}}}\xspace}
\newcommand{\indexes}{\mathcal{I}}
\begin{document}
\title{Simpler completeness proofs for modal logics with intersection}
%
%
\author{Y\`{i} N. W\'{a}ng\inst{1}
\and
Thomas {\AA}gotnes\inst{2,3}
}
%
%
\institute{Zhejiang University, Hangzhou, China\\\email{ynw@xixilogic.org}
\and University of Bergen, Bergen, Norway
\and Southwest University, Chongqing, China\\
\email{Thomas.Agotnes@infomedia.uib.no}}
\maketitle              
\begin{abstract}
  There has been a significant interest in extending various modal
  logics with intersection, the most prominent examples being
  epistemic and doxastic logics with distributed knowledge.
  Completeness proofs for such logics tend to be complicated, in
  particular on model classes such as S5 like in standard epistemic
  logic, mainly due to the undefinability of intersection of
  modalities in standard modal logics. A standard proof method for the
  S5 case was outlined in \cite{HM1992} and later explicated in more
  detail in \cite{WA2013pacd}, using an ``unraveling-folding method''
  case to achieve a treelike model to deal with the problem of
  undefinability. This method, however, is not easily adapted to other
  logics, due to the level of detail and reliance on S5. In this paper we
  propose a simpler proof technique by building a treelike canonical
  model directly, which avoids the complications in the processes of
  unraveling and folding. We demonstrate the technique by showing
  completeness of the normal modal logics K, D, T, B, S4 and S5
  extended with intersection modalities. Furthermore, these treelike
  canonical models are compatible with Fischer-Ladner-style closures, and
  we combine the methods to show the completeness of the mentioned
  logics further extended with transitive closure of union
  modalities known from PDL or epistemic logic. Some of these
  completeness results are new.

\keywords{modal logic \and intersection modality \and union modality \and completeness \and epistemic logic \and distributed knowledge.}
\end{abstract}
\section{Introduction}

Intersection plays a role in several areas of modal logic, including
epistemic logics with distributed knowledge \cite{MvdH1995,FHMV1995},
propositional dynamic logic with intersection of
programs \cite{HKT2000}, description logics with concept intersection
\cite{BCMNP-S2017,BHLS2017}, and coalition logic \cite{asia2015}.  It
is well-known that intersection is not modally definable and that
standard logics with intersection are not canonical (cf., e.g.,
\cite{hoek92making}).

A method for proving completeness was introduced by
\cite{HM1992,hoek92making,FHMV1995,MvdH1995} for various (static)
epistemic logics with distributed knowledge, and later explicated and
extended in \cite{WA2013pacd,WA2015rck-del,Wang2013thesis} as the
\emph{unraveling-folding method} which is applicable to various static
or dynamic epistemic S5 logics with distributed knowledge with or
without common knowledge.

Let us take a closer look at this technique for epistemic logic with
distributed knowledge (S5D). It is known that the canonical S5 model
built in the standard way is not a model for the classical
axiomatization for this logic. This is because the accessibility
relation $R_G$ (where $G$ is a set) that is (implicitly) used to
interpret the interesection (distributed knowledge) modality is not
necessarily the intersection of individual accessibility relations
$R_a$ ($a \in G$).  In the canonical S5 model we can ensure that
${R_G}\subseteq \bigcap_{a\in G} {R_a}$, but not that
${R_G} \supseteq \bigcap_{a\in G} {R_a}$.

The unraveling-folding method is carried out in the following way. A 
\emph{pre-model} is a standard S5 model where $R_G$ is treated as a
primitive relation for each group $G$. A \emph{pseudo model}
is a pre-model satisfying the following two constraints
\begin{enumerate}
\item ${R_{\{a\}}} = {R_a}$ for every agent $a$, and 
\item ${R_G} \subseteq {\bigcap_{a\in G} {R_a}}$ for every agent $a$ and group $G$
\end{enumerate}
An S5D model is then a pseudo model that satisfies also a third constraint:
\begin{enumerate}[resume]
\item ${R_G} \supseteq {\bigcap_{a\in G} {R_a}}$ for every agent $a$ and group $G$
\end{enumerate}
A canonical pseudo model can be truth-preservingly translated to a
\emph{treelike} pre-model using an \emph{unraveling} technique, and
then \emph{folded} to an S5D model while also preserving the truth of
all formulas (for details of the two processes see
\cite{WA2013pacd}). Completeness is achieved by first building a
canonical pseudo model for a given consistent set $\Phi$ of formulas,
and then having it translated to an S5D model for $\Phi$ using the
unraveling-folding method.

There are many subtleties not mentioned in this simplified overview,
which in particular makes the method cumbersome to adapt to extensions
of basic epistemic logic or to non-S5 based logics. 

In this paper we demonstrate a simpler way to prove completeness for
modal logics with intersection. Since we know that a treelike model typically
works for such logics, the idea is to build a treelike model
\emph{directly} for a given consistent set of formulas. We call
such a model a \emph{standard model}. This eliminates having to deal
with the details of the unraveling and folding processes, and
dramatically simplifies the proofs.

We illustrate the technique by building the standard model for each of
the modal logics, K, D, T, B, S4 and S5, extended with
intersection. We furthermore demonstrate that the method is useful by
showing that it is compatible with finitary methods based on
Fischer-Ladner-style closures, and introduce finitary standard models
for the mentioned logics further extended with the transitive closure
of the union, using in, e.g., PDL and epistemic logic, as well. Some
of these completeness results have been stated in the literature
before, often without proof, some of them not.  For example, to the
best of our knowledge, no completeness results have been reported for
D or B extended with intersection, and even less can be found for
logics with both intersection and the transitive closure of union.

The rest of the paper is structured as follows. In the next section we
introduce basic definitions and conventions. We study some modal logics
with (only) intersection in Section \ref{sec:logicd}, introduce an
axiomatization for each of them and show its completeness, and then
study the logics extended further with transitive closure of union in Section
\ref{sec:logiccd}. We conclude in Section \ref{sec:con}.

\section{Preliminaries}
\label{sec:pre}

In this paper we study modal logics over multi-modal languages with
countably many standard unary modal operators: $\Box_0$, $\Box_1$,
$\Box_2$, etc. On top of these we focus on two types of modal
operators, each \emph{indexed} by a finite nonempty set $I$ of natural numbers:
\begin{itemize}
\item \emph{Intersection modalities}, denoted $\Int_I$;
\item \emph{\Unionplus modalities}, denoted $\Union_I$.
\end{itemize}
We mention some applications of these modalities below.

The languages are parameterized by a countably infinite set $\pr$ of
propositional variables, and an at-most countable set $\indexes$ of primitive types. A
finite non-empty subset $I \subseteq \indexes$ is called an \emph{index}.  We are interested in the following languages.

\begin{definition}[languages]
$$\begin{array}{ll}
(\lang)& \phi ::=  p \mid \neg \phi \mid (\phi\ra\phi) \mid \Box_i\phi\\
(\langd)& \phi ::= p \mid \neg \phi \mid (\phi\ra\phi) \mid \Box_i\phi \mid \Int_I\phi \\
(\langcd)&\phi ::= p \mid \neg \phi \mid (\phi\ra\phi) \mid \Box_i\phi \mid \Int_I\phi \mid \Union_I\phi\\
\end{array}$$
where $p\in\pr$, $i\in\indexes$ and $I$ an index. Other Boolean connectives are defined as usual.
\end{definition}

A Kripke model $M$ (over $\pr$ and $\indexes$) is a triple \modelEx, where $S$ is a nonempty set of states, $\rel:\indexes\to\wp(S\times S)$ assigns to every modality $\Box_i$ a binary
relation $\rel_i$ on $S$, and $V:\pr\to S$ is a valuation which associates with every propositional variable a set of states where it is true.

\begin{definition}[satisfaction]
For a given formula $\alpha$, the \emph{truth} of it in, or its \emph{satisfaction} by, a model $M=\modelEx$ with a designated state $s$, denoted $M,s\models\alpha$, is defined inductively as follows.
$$\begin{array}{l@{\quad}l@{\quad}l}
M,s\models p&\iff&s\in V(p)\\
M,s\models \neg\phi&\iff&\text{not}\ \pointedmodel\models\phi\\
M,s\models(\phi\ra\psi)&\iff&M,s\models\phi\ \text{implies}\ M,s\models\psi\\
M,s\models \Box_i\phi&\iff&\text{for all $t\in S$, if $(s,t) \in \rel_i$ then $M,t\models\phi$}\\[.5ex]
M,s\models \Int_I\phi&\iff&\text{for all $t\in S$, if $(s,t)\in \bigcap_{i\in I}\rel_i$ then $M,t\models\phi$}\\[.5ex]
M,s\models \Union_I\phi&\iff&\text{for all $t\in S$, if $(s,t) \in \biguplus_{i\in I} \rel_i$ then $M,t\models\phi$}
\end{array}$$
where\footnote{Although the symbol $\biguplus$ is sometimes used for
  disjoint union, we repurpose it here for transitive closure of
  union.} $\biguplus_{i\in I} R_i$ is the transitive closure of $\bigcup_{i\in I} R_i$.
\end{definition}

Thus, the \emph{intersection} modalities are interpreted by taking the
intersection, and the \emph{\unionplus} modalities by taking the
transitive closure of the union. We use ``\unionplus
modalities'' as a short name to avoid the more awkward ``transitive closure of
union modalities''.

Given a formula $\phi$ and a class $\msc$ of models, we say $\phi$ is
\emph{valid} in $\msc$ iff $\phi$ is valid in all models of $\msc$. We
usually do not choose a class of models arbitrarily, but are rather
interested in those based on a certain set of conditions over the
binary relations in a model. Such conditions are often called
\emph{frame conditions}. In this paper we are going to focus on some
of the most well known frame conditions (see, e.g., \cite{Chellas1980}). These conditions are \emph{seriality}, \emph{reflexivity},  \emph{symmetry}, \emph{transitivity} and \emph{Euclidicity}. It is well known that these frame conditions are characterized by the formulas D ($\Box_i\phi\ra\neg \Box_i\neg\phi$), T ($\Box_i\phi\ra\phi$), B ($\neg\phi\ra \Box_i\neg \Box_i\phi$), 4 ($\Box_i\phi\ra \Box_i\Box_i\phi$) and 5 ($\neg \Box_i\phi\ra \Box_i\neg \Box_i\phi$), respectively. With respect to different combinations of these frame conditions, normal modal logics \lk, \ld (a.k.a. KD), \lt (a.k.a. KT), \lb (a.k.a. KTB), \lsfour (a.k.a., KT4) and \lsfive (a.k.a. KT5) based on the language \lang are well studied in the literature. We shall refer an ``S5 model'' to a Kripke model in which the binary relation is an equivalence relation, and likewise for a D, T, B or S4 model.

In this paper we will focus on the counterpart logics over the languages \langd and \langcd, and they will be named in a comprehensive way as follows:
\begin{center}
\begin{tabular}{llllll}
\ldk,& \ldd,& \ldt, & \ldb,& \ldsfour,  &\ldsfive,\\
\lcdk,& \lcdd,& \lcdt, & \lcdb,& \lcdsfour,  &\lcdsfive.\\
\end{tabular}
\end{center}

There are well known applications of these logics, for example are \ldsfive and
\lcdsfive (under the restriction that $\indexes$ is finite) well known
as S5D and S5CD respectively in the area of epistemic logic. The
logics \ldk and \ldsfour are known as $\mathcal{ALC}(\Int)$ (i.e.,
$\mathcal{ALC}$ with role intersection) and $\mathcal{S}(\Int)$ (where
$\mathcal{S}$ is $\mathcal{ALC}$ with role transitivity) respectively
in the area of description logic \cite{BCMNP-S2017,BHLS2017}.%
\footnote{The subscript $i$ of a unary modal operator $\Box_i$
  typically stands for an agent in epistemic logic or a role in
  description logic. In epistemic logic, a finite number of
  agents is assumed, and the intersection modality (i.e., a
  distributed knowledge operator) is an arbitrary intersection over a
  finite domain. In
  description logic, the number of roles are typically unbounded,
  but the intersection is binary, which is in effect equivalent to finite
  intersection. }  The logic \lcdk is
close to propositional dynamic logic with intersection (IPDL)
\cite{HKT2000} or the description logic $\mathcal{ALC(\cap,\cup,*)}$,
and similarly, \lcdsfour close to $\mathcal{S(\cap,\cup,*)}$.%
\footnote{There are two major differences however. First, the Kleene
  star in both logics are the reflexive-transitive closure, and we
  consider the transitive closure which is denoted by a ``$+$'' in the
  symbol $\Union$. Second, $\Union_{I}$ is a compound modality (union
  and then take the transitive closure), while in those logics the
  Kleene star is separated from the union, and as a result, the Kleene
  star applies to the intersection as well, which we do not consider
  here.}

The minimal logic K can be axiomatized by the system \axk composed of the following axiom (schemes) and rules (where $\phi,\psi\in\lang$ and $i\in\indexes$):
\begin{itemize}[leftmargin=3em]
\item[(PC)] all instances of all propositional tautologies
\item[(MP)] from $(\phi\ra\psi)$ and $\phi$ infer $\psi$
\item[(K)] $\Box_i(\phi\ra\psi) \ra (\Box_i\phi \ra \Box_i\psi)$
\item[(N)] from $\phi$ infer $\Box_i\phi$
\end{itemize}
Axiomatizations for D, T, B , S4 and S5, which are named \axd, \axt,
\axb, \axsfour and \axsfive respectively, can be obtained by adding
characterization axioms to \axk. In more detail, $\axd = \axk \oplus
\text{D}$, $\axt = \axk \oplus \text{T}$, $\axb = \axt \oplus
\text{B}$, $\axsfour = \axt \oplus \text{4}$ and $\axsfive = \axt
\oplus \text{5}$, where the symbol $\oplus$ means combining the axioms and rules of the two parts. Details can be found in standard modal logic textbooks (see, e.g., \cite{Chellas1980,BdRV2001}).

A logic extended with the intersection modality is typically
axiomatized by adding axioms and rules to the corresponding logic
without intersection. The axioms and rules to be added are in total
called the \emph{characterization of intersection}, and depends on
which logic we are dealing with. Similarly we can define the \emph{characterization of the transitive closure of union}, which can be made independent to the concrete logic (will be made clear in Section \ref{sec:logiccd}).

Characterizations of intersection and transitive closure of union can be found in the literature for some of the logics, including \ldk, \ldt, \ldsfour, \ldsfive and \lcdsfive in epistemic logic (see \cite{FHMV1995,MvdH1995,Wang2013thesis}). In particular, for the base logic S5, the characterizations are \dksfive and \cksfive, respectively:

\paragraph{\dksfive} characterization of intersection in \axdsfive and
\axcdsfive (D\Int, 4\Int, B\Int and N\Int are not necessary in the
sense that they are derivable):
\begin{itemize}
\item (K\Int) $\Int_I(\phi\ra\psi)\ra (\Int_I\phi \ra \Int_I\psi)$
\item (D\Int) $\Int_I \phi \ra \neg \Int_I \neg \phi$
\item (T\Int) $\Int_I \phi \ra \phi$
\item (4\Int) $\Int_I \phi \ra {\Int_I} {\Int_I} \phi$
\item (B\Int) $\neg\phi \ra {\Int_I} \neg {\Int_I} \phi$
\item (5\Int) $\neg{\Int_I}\phi \ra {\Int_I} \neg {\Int_I} \phi$
\item (N\Int) from $\phi$ infer $\Int_I\phi$
\item ($\Int$1) $\Box_i\phi \lra \Int_{\{i\}}\phi$
\item ($\Int$2) $\Int_{I} \phi \ra \Int_{J}\phi$, if $I\subseteq J$
\end{itemize}

\paragraph{\cksfive} characterization of transitive closure of union in \axcdsfive,
(D\Union, T\Union, 4\Union, B\Union, 5\Union and N\Union are not
necessary in the sense that they are derivable):
\begin{itemize}
\item (K\Union)  $\Union_I(\phi\ra\psi)\ra (\Union_I\phi \ra \Union_I\psi)$
\item (D\Union) $\Union_I\phi \ra \neg {\Union_I}\neg \phi$
\item (T\Union) $\Union_I \phi \ra \phi$
\item (4\Union)  $\Union_I \phi \ra {\Union_I} {\Union_I} \phi$
\item (B\Union) $\neg\phi \ra {\Union_I} \neg {\Union_I} \phi$
\item (5\Union) $\neg{\Union_I}\phi \ra {\Union_I} \neg {\Union_I} \phi$
\item (N\Union) from $\phi$ infer $\Union_I\phi$
\item ($\Union$1) $\Union_{I}\phi \ra \Box_i (\phi \wedge {\Union_I} \phi)$, if $i\in I$
\item ($\Union$2) from $\phi\ra \bigwedge_{i\in I} \Box_i (\phi\wedge\psi)$ infer $\phi\ra \Union_I \psi$
\end{itemize}

It is known that the axiomatization $\axdsfive = \axsfive \oplus
\dksfive$ is sound and complete for the logic \ldsfive, and
$\axcdsfive = \axdsfive \oplus \cksfive$ is sound and complete for the
logic \lcdsfive (see, e.g., \cite{FHMV1995}), in the case that
$\indexes$ is finite. However, since the intersection and \unionplus
modalities are interpreted as operations over relations for standard
box operators, their properties change in accordance with those for
standard boxes. As a result, the characterization axioms and rules
vary for weaker logics. We shall look into this in the following
sections.  First we define some basic terminology that will be useful.

\begin{definition}[paths, (proper) initial segments, rest, tail]
\label{def:path}
Given a model $M=(S,R,V)$, a \emph{path} of $M$ is a finite nonempty
sequence $\ab{s_0, I_0, \ldots, I_{n-1}, s_n}$ where:
(i) $s_0, \ldots, s_n \in S$, (ii) $I_0, \ldots,
I_{n-1}$ are indices, and (iii) for all $x=0, \ldots, n-1$, $(s_x,
s_{x+1}) \in \bigcap_{i \in I}R_i$. 

Given two paths $s=\ab{s_0, {I_0}, \ldots, I_{m-1}, s_m}$ and $t=\ab{t_0, {J_0}, \ldots, J_{n-1}, t_n}$ of a model,
\begin{itemize}
\item We say $s$ is an \emph{initial segment} of $t$, denoted $s
  \preceq t$, if $m\leq n$, $s_x=t_x$ for all $x=0,\ldots,m$, and
  $I_y=J_y$ for all $y= 0,\ldots,m-1$, and we say that $t$ extends $s$
  with $\ab{J_m, t_{m+1},\ldots, J_{n-1}, t_n}$;

\item We say $s$ is a \emph{proper initial segment} of $t$, denoted $s \prec t$, if the former is an initial segment of the latter and $m<n$;

\item We write $\tail(s)$ for $s_m$, and similarly $\tail(t)$ for $t_n$;

\item When $s$ is an initial segment of $t$, we write $t\setminus s$
  to stand for the path $\ab{t_m, J_{m}, \ldots, J_{n-1}, t_n}$. Note that $\tail(s)$ is kept in $t\setminus s$, and when $s = t$, we have $t\setminus s = \ab{t_n}$. 
\end{itemize}
Given a natural number $i$, a path $s=\ab{s_0, I_0, \ldots, I_{n-1}, s_n}$ is called:
\begin{itemize}
\item An \emph{$i$-path}, if $i$ appears in all the indices of the path, i.e., $i\in \bigcap_{x=0}^{n-1} I_x$ (note that a path of length 1, such as $\ab{s_0}$, is trivially an $i$-path).
\item An \emph{$I$-path}, where $I$ is an index, if $I$ is a subset of all the indices of the path, i.e., $I\subseteq \bigcap_{x=0}^{n-1} I_x$.
\end{itemize}

\end{definition}

\section{Logics over \langd}
\label{sec:logicd}

In this section we study the logics over the language \langd, namely,
\ldk, \ldd, \ldt, \ldb, \ldsfour and \ldsfive, which means that in this section a ``formula'' stands for a formula of \langd, and a ``logic'' without further explanation refers to one of the six. We shall provide a general method for proving completeness in these logics.

The axiomatization \ax we will provide for a logic \logic is an
extension of the axiomatization for the corresponding logic without
intersection, with the characterization of intersection. The characterization of intersection is dependent on the frame conditions. For a given class of models, the characterization of intersection is listed below:
\begin{itemize}
\item $\dkk = \dkd = \{\text{K}\Int, \text{N}\Int, \Int1, \Int2\}$
\item $\dkt = \{\text{K}\Int, \text{T}\Int, \text{N}\Int, \Int1, \Int2\}$
\item $\dkb = \{\text{K}\Int, \text{T}\Int, \text{B}\Int, \text{N}\Int, \Int1, \Int2\}$
\item $\dksfour = \{\text{K}\Int, \text{T}\Int, 4\Int, \text{N}\Int, \Int1, \Int2\}$
\item $\dksfive = \{\text{K}\Int, \text{T}\Int, 5\Int, \text{N}\Int, \Int1, \Int2\}$
\end{itemize}
where \dkk is the characterization of intersection for the class of all models, \dkd for the class of all D models, \dkt for the class of all T models, and so on. We stress that D\Int is not included in \dkd: it is in fact invalid in \ldd.

By adding the characterization of intersection to the axiomatization of a logic, we get an axiomatization for the corresponding logic over \langd. To be precise, we list the axiomatizations as follows:
$$\begin{array}{rcr@{\,\oplus\,}l}
\axdk &=& \axk & \dkk\\
\axdd &=& \axd & \dkd\\
\axdt &=& \axt & \dkt\\
\axdb &=& \axb & \dkb\\
\axdsfour &=& \axsfour & \dksfour\\
\axdsfive &=& \axsfive & \dksfive\\
\end{array}$$
It is not hard to verify that all the above axiomatizations are sound in their corresponding logic, respectively.

Some of the above axiomatizations, in particular, \axdk, \axdt,
\axdsfour and \axdsfive, are given in \cite{FHMV1995}. An outline of 
proof of completeness is also found there, without details. Detailed
proofs can be found for certain special cases, such as the 
\axdsfive with only a single intersection modality for the set of all
agents (which is assumed to be finite) \cite{FHV1992}.  A general
and detailed proof based on this technique can be found in \cite{WA2013pacd}. The proof goes through an unraveling-folding
procedure, mentioned already in the introduction. Due to the subtleties
in the unraveling and folding processes, it is diffucult to apply this
technique directly to new logics, as definitions may differ already
from the beginning (for example, the
definition of a \emph{path} is already different) all the way to the very end of the
procedure. This in general becomes an obstacle for developers of new
logics with the intersection modality.

We introduce a simpler method for proving completeness, that can
easily be adapted to a range of different logics. This is a relatively
simple variant of the standard canonical model method. For each of the
logics \logic mentioned above, with corresponding axiomatization \ax, we shall show that \ax is a complete axiomatization of L, which is equivalent to finding an \logic model for every \ax-consistent set of formulas. The model we are going to build is called a \emph{standard model}.

Let $\MCS^\ax$ be the set of all maximal \ax-consistent sets of \langd-formulas.\footnote{We refer to a modal logic textbook, say \cite{BdRV2001}, for a definition of a \emph{(maximal) consistent set of formulas}.} Given \ax, given an index $I$, we shall write $\rhd_I$ to stand for a binary relation on $\MCS^\ax$, such that $\Phi \rhd_I \Psi$ iff for all $\phi$, $\Int_I\phi \in \Phi$ implies $\phi\in\Psi$. This type of relations is typically used in the definition of a canonical model (as found in modal textbooks), and are sometimes called \emph{canonical relations}. We easily get the following proposition.

\begin{proposition}[canonicity]\label{prop:cr}
For any index $I$, the canonical relation $\rhd_I$ on $\MCS^\ax$ is:
\begin{enumerate}[leftmargin=3em]
\item\label{it:d} Serial, if $I$ is singleton and \ax is \axdd;
\item\label{it:t} Reflexive, if \ax is \axdt;
\item\label{it:b} Reflexive and symmetric, if \ax is \axdb;
\item\label{it:s4} Reflexive and transitive, if \ax is \axdsfour;
\item\label{it:s5} An equivalence relation, if \ax is \axdsfive.
\end{enumerate}
We note that if \ax is \axdd, $\rhd_I$ is not necessarily serial when $I$ is not a singleton. Moreover,
\begin{enumerate}[resume,leftmargin=3em]
\item\label{it:inc} $\rhd_J \subseteq \rhd_I$, for any index $J\supseteq I$.
\end{enumerate}
\end{proposition}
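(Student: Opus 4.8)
The plan is to treat $\rhd_I$ exactly as the canonical accessibility relation for the modality $\Int_I$ regarded as an ordinary normal box, and then to read each frame property off the intersection axioms that the relevant characterization happens to contain. I will rely on two standard facts about maximal \ax-consistent sets: that membership is closed under modus ponens with theorems (so $\Int_I\phi\in\Phi$ together with a theorem $\Int_I\phi\ra\chi$ forces $\chi\in\Phi$), and the existence lemma, namely that $\{\phi : \Int_I\phi\in\Phi\}$ extends to some $\Psi\in\MCS^\ax$ with $\Phi\rhd_I\Psi$ whenever that set is consistent. With this in hand the positive clauses (\ref{it:t})--(\ref{it:s5}) reduce to the textbook correspondence arguments, applied to $\Int_I$ in place of an individual $\Box_i$.

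For clauses (\ref{it:t})--(\ref{it:s5}) I would argue uniformly. Reflexivity comes from T\Int: if $\Int_I\phi\in\Phi$ then $\phi\in\Phi$, so $\Phi\rhd_I\Phi$; this settles (\ref{it:t}) and supplies the reflexive half of (\ref{it:b})--(\ref{it:s5}), since every characterization from \dkt onward contains T\Int. Symmetry comes from B\Int by the usual contrapositive: if $\Phi\rhd_I\Psi$ and $\Int_I\phi\in\Psi$ but $\phi\notin\Phi$, then $\neg\phi\in\Phi$ gives $\Int_I\neg\Int_I\phi\in\Phi$, hence $\neg\Int_I\phi\in\Psi$, contradicting $\Int_I\phi\in\Psi$; as \dkb contains B\Int this yields (\ref{it:b}). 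Transitivity comes from 4\Int: $\Int_I\phi\in\Phi$ gives $\Int_I\Int_I\phi\in\Phi$, so $\Int_I\phi\in\Psi$ and then $\phi\in\Xi$ along $\Phi\rhd_I\Psi\rhd_I\Xi$, giving (\ref{it:s4}). For (\ref{it:s5}) I would first derive Euclideanness from 5\Int: given $\Phi\rhd_I\Psi$, $\Phi\rhd_I\Xi$ and $\Int_I\phi\in\Psi$, if $\Int_I\phi\notin\Phi$ then $\neg\Int_I\phi\in\Phi$ yields $\Int_I\neg\Int_I\phi\in\Phi$ and hence $\neg\Int_I\phi\in\Psi$, which is impossible, so $\Int_I\phi\in\Phi$ and $\phi\in\Xi$; combining reflexivity (from T\Int) with Euclideanness then gives an equivalence relation in the standard way.

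Clause (\ref{it:d}) and its accompanying remark need a little more care, since D\Int is deliberately absent from \dkd. For singleton $I=\{i\}$, axiom $\Int1$ identifies $\Int_{\{i\}}$ with $\Box_i$, so the D axiom of \axd delivers $\Int_{\{i\}}\phi\ra\neg\Int_{\{i\}}\neg\phi$, i.e. a singleton instance of D\Int; equivalently $\neg\Int_I\bot$ becomes a theorem, which makes $\{\phi:\Int_I\phi\in\Phi\}$ consistent and, via the existence lemma, produces a successor $\Psi$. For non-singleton $I$ seriality can genuinely fail: D\Int is no longer derivable, and I would witness the failure semantically by taking a D-model in which each $R_i$ is serial but $\bigcap_{i\in I}R_i$ has a state with no successor, so that $\Int_I\bot$ holds there; the theory of that state is then a maximal \axdd-consistent set whose $\rhd_I$-image is empty.

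Finally, clause (\ref{it:inc}) is immediate from $\Int2$: if $I\subseteq J$ and $\Phi\rhd_J\Psi$, then $\Int_I\phi\in\Phi$ gives $\Int_J\phi\in\Phi$ by $\Int2$ and hence $\phi\in\Psi$, so $\Phi\rhd_I\Psi$ and $\rhd_J\subseteq\rhd_I$. I expect the main obstacle to lie not in the positive correspondence arguments, which are entirely routine once $\Int_I$ is viewed as a box, but in clause (\ref{it:d}): one must notice that seriality hinges precisely on the derivability of $\neg\Int_I\bot$, see why $\Int1$ rescues the singleton case, and construct an explicit model witnessing the failure for non-singleton indices.
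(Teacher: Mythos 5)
Your proof is correct and takes exactly the approach the paper implicitly relies on: the paper asserts this proposition without proof (``We easily get\ldots''), and what you supply is the standard canonical-relation correspondence argument for each axiom, applied to $\Int_I$ as a normal box. Your added care on clause~1 --- deriving $\neg\Int_{\{i\}}\bot$ from $\Int1$ together with the D axiom of \axd, and the two-serial-relations-with-empty-intersection countermodel justifying the remark for non-singleton $I$ --- correctly fills in the only non-routine detail.
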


\begin{definition}[canonical paths]
Given an axiomatization \ax, a \emph{canonical path} for \ax is a
sequence $\ab{\Phi_0, I_0, \ldots, I_{n-1}, \Phi_n}$ such that:

(i) $\Phi_0, \ldots, \Phi_n \in \MCS^\ax$,

(ii) $I_0, \ldots, I_{n-1}$ are indices, and

(iii) for all $x=0, \ldots, n-1$, $(s_x,s_{x+1}) \in \rhd_{I_x}$. 

\noindent Initial segments, $\tail(s)$, $i$-paths, $I$-paths, and so on, are
defined exactly like for paths in a model
(Def. \ref{def:path}).
\end{definition}

The \emph{standard models} we will define for these logics are a bit
different from the canonical model for a standard modal logic. As
mentioned existing proofs are based on transforming the canonical
model to a treelike model. We will construct a treelike model
directly: in the standard model for a logic L, a state will be a canonical path for \ax.
However, the binary relations in a standard model is dependent on the concrete logic we focus on. We now first define these binary relations and then introduce the definition of a standard model.

\begin{definition}[standard relations]
Given a logic L with its axiomatization \ax, we define
$\crelation^\logic$ as follows. For any $i \in\indexes$,
$\crelation^\logic_i$ is the binary relation on the set of canonical paths for \ax, called the \emph{standard relation for $i$}, such that:
\begin{itemize}
\item When L is \ldk or \ldd: for all canonical paths $s$ and $t$ for $\ax$, $(s,t)\in \crelation^{\logic}_i$ iff $t$ extends $s$ with $\ab{I, \Phi}$ for some $I\ni i$ and $\Phi\in\MCS^\ax$; 


\item When L is \ldt: for all canonical paths $s$ and $t$ for $\axdt$, $(s,t)\in \crelation^{\ldt}_i$ iff $t = s$ or $t$ extends $s$ with $\ab{I, \Phi}$ for some $I\ni i$ and $\Phi\in\MCS^\axdt$;

\item When L is \ldb: for all canonical paths $s$ and $t$ for $\axdb$,
  $(s,t)\in \crelation^{\ldb}_i$ iff (i) $t = s$ or (ii) $s$ extends
  $t$ with $\ab{I,\Phi}$ or (iii) $t$ extends $s$ with $\ab{I, \Phi}$ for some $i\in I$ and $\Phi\in\MCS^\axdb$; 

\item When L is \ldsfour: for all canonical paths $s$ and $t$ for
  $\axdsfour$, $(s,t)\in \crelation^{\ldsfour}_i$ iff $s$ is an
  initial segment of $t$ and $t\setminus s$ is a canonical $i$-path; 

\item When L is \ldsfive: for all canonical paths $s$ and $t$ for $\axdsfive$, $(s,t)\in \crelation^{\ldsfive}_i$ iff 
(i) $s$ and $t$ have a common initial segment $u$, and
(ii) both $s \setminus u$ and $t \setminus u$ are canonical $i$-paths. 

\end{itemize}
\end{definition}

\begin{definition}[standard models]
Given a logic L, the \emph{standard model} for L is a tuple $\cmodel^\text{L} = (\cdomain, \crelation, \cvaluation)$ such that:
\begin{itemize}
\item \cdomain is the set of all canonical paths for \ax; its elements are called \emph{states} of $\cmodel^\logic$.

\item ${\crelation} = {\crelation^\text{L}}$.

\item For any propositional variable $p$, $\cvaluation(p) = \{ s \in \cdomain \mid p \in \tail(s)\}$.
\end{itemize}
\end{definition}

\begin{lemma}[standardness]
The following hold:
\begin{enumerate}
\item $\cmodel^\ldk$ is a Kripke model;
\item $\cmodel^\ldd$ is a D model;
\item $\cmodel^\ldt$ is a T model;
\item $\cmodel^\ldb$ is a B model;
\item $\cmodel^\ldsfour$ is an S4 model;
\item $\cmodel^\ldsfive$ is an S5 model.
\end{enumerate}
\end{lemma}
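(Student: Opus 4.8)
The plan is to go through the six logics one at a time and check that the standard relation $\crelation^{\logic}_i$ carries exactly the frame property demanded of an \logic model; since $\cmodel^{\logic}$ is by construction a triple of the correct signature with nonempty domain — by Lindenbaum there is at least one maximal \ax-consistent set, hence at least the length-one canonical path $\ab{\Phi}$ — item~1 is just the observation that $\cmodel^{\ldk}$ is a well-formed Kripke structure, and each of items~2--6 reduces to one closure condition on $\crelation^{\logic}_i$. Before the case analysis I would record two elementary facts about paths, used throughout: (a) the initial segments of a fixed canonical path are linearly ordered by $\preceq$ (two initial segments of the same path agree on their common prefix and are pinned down by their length, so the shorter is an initial segment of the longer), and (b) being a canonical $i$-path is preserved both under passing to an initial segment and under concatenating two paths at a shared endpoint — both are immediate from the defining condition $i \in \bigcap_x I_x$.

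The cases \ldk, \ldd, \ldt and \ldb are then short. For \ldd, seriality is read off Proposition~\ref{prop:cr}(\ref{it:d}): given a state $s$, take the singleton index $\{i\}$; since $\rhd_{\{i\}}$ is serial there is some $\Phi \in \MCS^{\axdd}$ with $\tail(s) \rhd_{\{i\}} \Phi$, and extending $s$ by $\ab{\{i\}, \Phi}$ produces an $i$-successor. For \ldt, reflexivity is immediate from the clause $t = s$ in the definition of $\crelation^{\ldt}_i$. For \ldb, reflexivity again comes from the clause $t = s$, and symmetry holds because the definition of $\crelation^{\ldb}_i$ is visibly symmetric in $s$ and $t$, its clauses (ii) and (iii) simply swapping roles.

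The two genuinely non-trivial cases are \ldsfour and \ldsfive, and in both the content is in transitivity. For \ldsfour, reflexivity uses that $s \setminus s = \ab{\tail(s)}$ is a length-one path and hence trivially an $i$-path; for transitivity, if $s \preceq t$ with $t \setminus s$ an $i$-path and $t \preceq r$ with $r \setminus t$ an $i$-path, then $s \preceq r$ by transitivity of $\preceq$ and $r \setminus s$ is the concatenation of the $i$-paths $t \setminus s$ and $r \setminus t$, so it is an $i$-path by fact (b). The step I expect to be the main obstacle is transitivity for \ldsfive, where the two witnessing common initial segments need not match up. Suppose $(s,t) \in \crelation^{\ldsfive}_i$ is witnessed by a common initial segment $u$ of $s$ and $t$, and $(t,r) \in \crelation^{\ldsfive}_i$ by a common initial segment $v$ of $t$ and $r$. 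Both $u$ and $v$ are initial segments of $t$, so by fact (a) they are $\preceq$-comparable; assuming without loss of generality $u \preceq v$, I would show $u$ witnesses $(s,r)$: it is an initial segment of $s$ (given) and of $r$ (since $u \preceq v \preceq r$), $s \setminus u$ is an $i$-path (given), and $r \setminus u$ splits as $v \setminus u$ followed by $r \setminus v$, the former an initial segment of the $i$-path $t \setminus u$ and the latter an $i$-path, so by fact (b) $r \setminus u$ is an $i$-path. The subcase $v \preceq u$ is symmetric, and reflexivity ($u = s$) and symmetry of $\crelation^{\ldsfive}_i$ are immediate, which finishes the equivalence-relation check.
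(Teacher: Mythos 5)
Your proof is correct, and since the paper states this lemma without proof, your direct verification of each frame condition is exactly the argument the authors leave implicit. The two facts you isolate — linear ordering of initial segments of a fixed path, and closure of $i$-paths under initial segments and concatenation at a shared endpoint — are precisely what is needed to handle the only delicate point, transitivity of $\crelation^{\ldsfive}_i$ when the two witnessing common initial segments differ, and your case analysis there is sound.
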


\begin{lemma}[existence]
For any logic L, any state $s$ of $\cmodel^\text{L}$, and any index $I$, if $\Int_I\phi \notin \tail(s)$ then there is a state $t$ of $\cmodel^\text{L}$ such that $(s,t) \in \bigcap_{i\in I}\crelation^\text{L}_i$ and $\phi \notin \tail (t)$.
\end{lemma}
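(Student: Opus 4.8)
The plan is to reduce the claim to the familiar existence property for the canonical relation $\rhd_I$, and then to realise the required successor as a single-edge extension of $s$, which will turn out to be a legal $\crelation^\text{L}_i$-successor in all six logics at once. Throughout, write $s=\ab{\Phi_0,I_0,\ldots,I_{n-1},\Phi_n}$ and put $\Phi=\tail(s)=\Phi_n$.

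First I would establish the canonical-relation step: since $\Int_I\phi\notin\Phi$, there is a maximal consistent set $\Psi$ with $\Phi\rhd_I\Psi$ and $\phi\notin\Psi$. For this, consider the set $\{\psi \mid \Int_I\psi\in\Phi\}\cup\{\neg\phi\}$ and show it is consistent. If it were not, there would be $\psi_1,\ldots,\psi_k$ with $\Int_I\psi_j\in\Phi$ and $\vdash(\psi_1\wedge\cdots\wedge\psi_k)\ra\phi$; applying $\text{N}\Int$ and $\text{K}\Int$ --- both present in every characterization from $\dkk$ to $\dksfive$ --- yields $\vdash(\Int_I\psi_1\wedge\cdots\wedge\Int_I\psi_k)\ra\Int_I\phi$, whence $\Int_I\phi\in\Phi$ by maximality, a contradiction. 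A Lindenbaum extension then gives the desired $\Psi$: $\Phi\rhd_I\Psi$ holds by construction, and $\phi\notin\Psi$ because $\neg\phi\in\Psi$.

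Next I would define $t$ to be $s$ extended with $\ab{I,\Psi}$, i.e. $t=\ab{\Phi_0,I_0,\ldots,I_{n-1},\Phi_n,I,\Psi}$. Since $\Phi_n=\Phi\rhd_I\Psi$, the appended edge satisfies the edge condition for canonical paths, so $t$ is itself a canonical path and hence a state of $\cmodel^\text{L}$; moreover $\tail(t)=\Psi$, so $\phi\notin\tail(t)$. The key observation is that $t\setminus s=\ab{\Phi,I,\Psi}$ is a canonical path whose only index is $I$, so it is a canonical $i$-path for every $i\in I$ (indeed an $I$-path). It then remains to check, case by case, that $(s,t)\in\crelation^\text{L}_i$ for each $i\in I$: in $\ldk$ and $\ldd$, $t$ extends $s$ with $\ab{I,\Psi}$ and $I\ni i$; in $\ldt$ the same ``extends'' disjunct applies; in $\ldb$ clause (iii) applies; in $\ldsfour$ we have $s\preceq t$ with $t\setminus s$ a canonical $i$-path; and in $\ldsfive$ we take the common initial segment $u=s$, so that $s\setminus u=\ab{\Phi}$ is trivially an $i$-path and $t\setminus u=t\setminus s$ is an $i$-path. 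In every case $(s,t)\in\crelation^\text{L}_i$ for all $i\in I$, hence $(s,t)\in\bigcap_{i\in I}\crelation^\text{L}_i$, as required.

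The only step carrying genuine content is the canonical-relation existence argument, which is just the standard normality computation; the rest is uniform bookkeeping. The point that makes this easy --- and that vindicates building the treelike model directly --- is that one and the same successor $t$ works for all $i\in I$ simultaneously and across all six logics, because ``appending a single edge labelled with the whole index $I$'' is an admissible move in each definition of $\crelation^\text{L}_i$. The one thing to be careful about is to label the appended edge with the full $I$ rather than a singleton: this is precisely what keeps $t$ a legal canonical path (via $\Phi\rhd_I\Psi$) while making $t\setminus s$ an $i$-path for every $i\in I$ at once.
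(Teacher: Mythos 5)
Your proposal is correct and follows essentially the same route as the paper's own proof: build a maximal consistent set containing $\neg\phi$ and all $\psi$ with $\Int_I\psi\in\tail(s)$ via the standard normality argument, then append the single edge $\ab{I,\cdot}$ to $s$. You merely spell out the consistency computation and the case-by-case check of $(s,t)\in\crelation^\text{L}_i$ that the paper leaves implicit.
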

\begin{proof}
Let $s$ be a state of $\cmodel^\text{L}$ and $\Int_I\phi \notin \tail(s)$. So $\neg \Int_I\phi\in \tail(s)$. Consider the set $\Phi^- = \{\neg\phi\}\cup \{ \psi \mid \Int_I\psi \in \tail(s)\}$. We can show $\Phi^-$ is \ax consistent just as in a classical proof of the existence lemma (see, e.g., \cite{BdRV2001}). We can then extend it into a maximal consistent set $\Phi$ of formulas using the Lindenbaum construction. Since $\neg\phi\in \Phi$, $\phi\notin\Phi$. Let $t$ be $s$ extended with $\ab{I, \Phi}$. By definition it is clear that $\phi \notin \tail(t)$ and for all L, $(s,t) \in \bigcap_{i\in I}\crelation^\text{L}_i$ (since $s \crelation^\text{L}_i t$ for all $i\in I$).
\end{proof}

\begin{lemma}[truth]\label{lem:truth-d}
Given a logic \logic, a formula $\phi$, and a state $s$ of $\cmodel^\logic$,
$$\cmodel^\text{L},s\models \phi \qiff \phi \in \tail(s).$$
\end{lemma}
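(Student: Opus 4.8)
The plan is to prove both directions at once by induction on the structure of $\phi$. The atomic case $\phi = p$ is immediate from the definition of $\cvaluation$, since $s \in \cvaluation(p)$ iff $p \in \tail(s)$. The Boolean cases $\phi = \neg\psi$ and $\phi = (\psi\ra\chi)$ are routine, following from the induction hypothesis together with the fact that each $\tail(s)$ is a maximal \ax-consistent set and hence closed under the usual Boolean conditions. All the real work is in the modal cases.

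First I would dispose of the $\Box_i$ case by reducing it to the intersection case. By axiom ($\Int$1) and the fact that $\tail(s)$ is a maximal consistent set, $\Box_i\phi \in \tail(s)$ iff $\Int_{\{i\}}\phi \in \tail(s)$; and semantically $\cmodel^\logic,s\models\Box_i\phi$ iff $\cmodel^\logic,s\models\Int_{\{i\}}\phi$, since $\bigcap_{j\in\{i\}}\crelation^\logic_j = \crelation^\logic_i$. Thus $\Box_i$ is just the $\Int_I$ case with $I = \{i\}$, and it suffices to treat $\Int_I$.

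For $\phi = \Int_I\psi$ the two directions are handled by different tools. The implication from satisfaction to membership is the contrapositive of the existence lemma: if $\Int_I\psi\notin\tail(s)$, the existence lemma supplies a state $t$ with $(s,t)\in\bigcap_{i\in I}\crelation^\logic_i$ and $\psi\notin\tail(t)$, whence by the induction hypothesis $\cmodel^\logic,t\not\models\psi$ and so $\cmodel^\logic,s\not\models\Int_I\psi$. For the converse I would establish the key auxiliary claim that $(s,t)\in\bigcap_{i\in I}\crelation^\logic_i$ implies $(\tail(s),\tail(t))\in\rhd_I$. Granting this, if $\Int_I\psi\in\tail(s)$ then by the definition of $\rhd_I$ we obtain $\psi\in\tail(t)$, and the induction hypothesis yields $\cmodel^\logic,t\models\psi$ for every such $t$, i.e.\ $\cmodel^\logic,s\models\Int_I\psi$.

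The main obstacle is therefore this auxiliary claim, which I would prove by case analysis on the logic \logic, drawing throughout on Proposition~\ref{prop:cr}. For \ldk and \ldd the relation $\crelation^\logic_i$ forces a single-step extension, so $(s,t)\in\bigcap_{i\in I}\crelation^\logic_i$ means $t$ extends $s$ by one step with some fixed index $J$ satisfying $J\supseteq I$, and then $(\tail(s),\tail(t))\in\rhd_J\subseteq\rhd_I$ by the monotonicity clause (Proposition~\ref{prop:cr}, item~\ref{it:inc}). The logics \ldt, \ldb and \ldsfour require in addition, respectively, the reflexivity, the reflexivity and symmetry, and the reflexivity and transitivity of $\rhd_I$ from Proposition~\ref{prop:cr} in order to absorb identity steps, reversed steps, and multi-step $I$-paths. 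The hardest case is \ldsfive, where $t$ need not extend $s$: here I would let $u$ be the longest common initial segment of $s$ and $t$, argue that each of $s\setminus u$ and $t\setminus u$ is a canonical $I$-path (any suffix of an $i$-path is again an $i$-path, and $u$ extends every per-$i$ common segment witnessing $(s,t)\in\crelation^\logic_i$), and then conclude $(\tail(s),\tail(t))\in\rhd_I$ from the fact that $\rhd_I$ is an equivalence relation. Checking that the separate per-modality witnesses can be uniformized into this single longest common segment is the delicate point of the whole argument.
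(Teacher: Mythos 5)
Your proposal is correct and follows essentially the same route as the paper: induction on $\phi$, reduction of $\Box_i$ to $\Int_{\{i\}}$ via ($\Int$1), the existence lemma for the direction from satisfaction to membership, and a per-logic case analysis using Proposition~\ref{prop:cr} (monotonicity, reflexivity, symmetry, transitivity of $\rhd_I$) for the converse. Your packaging of that case analysis as the explicit claim that $(s,t)\in\bigcap_{i\in I}\crelation^\logic_i$ implies $\tail(s)\rhd_I\tail(t)$, and your remark on uniformizing the per-$i$ common initial segments in the \ldsfive case, are only presentational refinements of what the paper does.
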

\begin{proof}
The proof is by induction on $\phi$. The atomic case is by definition. Boolean cases are easy to show. Interesting cases are for the modalities $\Box_i$ ($i\in\indexes$) and $\Int_I$ ($I$ is an index). We start with the case for $\Int_I\psi$.
$$\begin{array}{cll}
& \cmodel^\text{L},s \models \Int_I\psi \\
\Lra& \text{for all $t$, if $(s,t)\in\bigcap_{i\in I} \crelation^\text{L}_i$ then $\cmodel^\text{L},t\models\psi$}\\
\Lra& \text{for all $t$, if $(s,t)\in\bigcap_{i\in I} \crelation^\text{L}_i$ then $\psi\in\tail(t)$}\\
\Ra& \Int_I\psi \in \tail(s) & \text{(existence lemma)}
\end{array}$$
For the converse direction of the last step, suppose $\Int_I\psi \in
\tail(s)$ and assume towards a contradiction that there is a state $t$ such that $(s,t)\in \bigcap_{i\in I} \crelation^\text{L}_i$ and $\psi\notin\tail(t)$.
\begin{itemize}
\item If L is \ldk or \ldd, it must be that $t$ extends $s$ with $\ab{J, \Phi}$ for $J\supseteq I$ and $\Phi\in\MCS^\ax$. By definition ${\tail(s)}\rhd_J {\tail(t)}$, and by Proposition \ref{prop:cr}.\ref{it:inc}, we have ${\tail(s)}\rhd_I {\tail(t)}$.
Therefore $\psi\in\tail(t)$, which leads to a contradiction.

\item If L is \ldt, we face an extra case compared with the above, namely $s =t$. A contradiction can be reached by applying the axiom T\Int.

\item If L is \ldb, there are three cases: (i) $t = s$ or (ii) $s=\ab{t, J,\Phi}$ or (iii) $t = \ab{s, J, \Phi}$ where $J\supseteq I$ and $\Phi\in\MCS^\axb$. Case (i) can be shown similarly to the case when L is \ldt, and case (iii) to the case when L is \ldk or \ldd. For case (ii), it is important to observe that $\rhd_I$ is symmetric (Proposition \ref{prop:cr}.\ref{it:b}) and $\rhd_J \subseteq \rhd_I$ (Proposition \ref{prop:cr}.\ref{it:inc}).

\item If L is \ldsfour, $s$ must be an initial segment of $t$ and $t\setminus s$ is an $I$-path. We get ${\tail(s)}\rhd_I {\tail(t)}$ by Proposition \ref{prop:cr}.\ref{it:inc} and the reflexivity and transitivity of $\rhd_I$ (Proposition \ref{prop:cr}.\ref{it:s4}). Therefore $\psi\in\tail(t)$ which leads to a contradiction.

\item If L is \ldsfive, then $s$ and $t$ have a common initial segment $u$, and $s\setminus u$ and $t\setminus u$ are both $I$-paths. For special cases when one of $s$ and $t$ is an initial segment of the other, it can be shown like in the case when L is  \ldsfour. The interesting case is when $s$ and $t$ really fork, in this case we can show both $\tail(s) \rhd_I \tail(u)$ and $\tail(u)\rhd_I \tail(t)$ by transitivity and symmetry of $\rhd_I$ (Proposition \ref{prop:cr}.\ref{it:s5}) and Proposition \ref{prop:cr}.\ref{it:inc}, so that $\tail(s) {\rhd_I} \tail(t)$. Then $\psi\in\tail(t)$, which leads to a contradiction.
\end{itemize}
Finally, the case for $\Box_i\psi$:
\[\begin{array}{cll}
& \cmodel^\text{L},s \models \Box_i\psi \\
\Lra & \cmodel^\text{L},s \models \Int_{\{i\}}\psi& \text{(validity of $\Int1$)}\\
\Lra & \Int_{\{i\}}\psi \in \tail(s) & \text{(special case of $\Int_I\psi$)}\\
\Lra & \Box_i\psi\in \tail(s) & \text{(axiom $\Int1$)}\\
\end{array}\]
\end{proof}

\begin{theorem}[strong completeness]
Given  $\logic\in \{\ldk, \ldd, \ldt, \ldb, \ldsfour, \ldsfive\}$ and its axiomatization \ax, for any $\Phi\subseteq \langd$ and $\phi\in\langd$, if $\Phi \models \phi$, then $\Phi \vdash_{\ax}\phi$.
\end{theorem}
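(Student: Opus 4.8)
The plan is to argue by contraposition, thereby reducing strong completeness to the lemmas already in hand. Concretely, I would assume $\Phi \not\vdash_\ax \phi$ and produce a single state of the standard model $\cmodel^\logic$ that satisfies every formula of $\Phi$ while refuting $\phi$. Combined with the standardness lemma, which guarantees that $\cmodel^\logic$ lies in the model class appropriate to \logic, this yields $\Phi \not\models \phi$, which is exactly the contrapositive of the stated implication.

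First I would note that, since $\Phi \not\vdash_\ax \phi$, the set $\Phi \cup \{\neg\phi\}$ is \ax-consistent: were it inconsistent, propositional reasoning would give $\Phi \vdash_\ax \phi$. Applying the Lindenbaum construction, I extend $\Phi \cup \{\neg\phi\}$ to a maximal \ax-consistent set $\Phi^+ \in \MCS^\ax$, so that $\Phi \subseteq \Phi^+$ and $\neg\phi \in \Phi^+$; by consistency of $\Phi^+$ we then have $\phi \notin \Phi^+$.

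Next I would take the length-one sequence $s = \ab{\Phi^+}$. By the definition of canonical paths, $s$ is trivially a canonical path for \ax (conditions (ii) and (iii) hold vacuously), hence a state of $\cmodel^\logic$, and $\tail(s) = \Phi^+$. The truth lemma (Lemma \ref{lem:truth-d}) then gives, for every formula $\psi$, that $\cmodel^\logic, s \models \psi$ iff $\psi \in \Phi^+$. In particular $\cmodel^\logic, s \models \psi$ for all $\psi \in \Phi$ (since $\Phi \subseteq \Phi^+$) and $\cmodel^\logic, s \not\models \phi$ (since $\phi \notin \Phi^+$). As the standardness lemma ensures $\cmodel^\logic$ is a model of the class corresponding to \logic, this exhibits a model of the right class witnessing $\Phi \not\models \phi$, completing the contrapositive.

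The point worth stressing is that essentially no obstacle remains at this stage: all the genuine difficulty has been absorbed into the existence, standardness and truth lemmas — in particular the $\Int_I$ case of the truth lemma and the verification that each standard relation $\crelation^\logic$ carries the required frame properties. The completeness theorem itself is only the routine Lindenbaum-plus-truth-lemma wrap-up. The one small thing I would check with care is that a one-element path genuinely counts as a state of $\cmodel^\logic$, so that the single maximal consistent set $\Phi^+$ is realizable as a point of the treelike model; everything else is standard.
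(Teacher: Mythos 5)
Your proof is correct and follows essentially the same route as the paper's: contraposition, Lindenbaum extension of $\Phi\cup\{\neg\phi\}$, the singleton path $\ab{\Phi^+}$ as a state, and the truth lemma. You are in fact slightly more careful than the paper in explicitly invoking the standardness lemma to ensure the model lies in the right class.
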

\begin{proof}
Suppose $\Phi \nvdash_{\ax}\phi$. It follows that $\Phi\cup\{\neg\phi\}$ is \ax consistent. Extend it to be a maximal set $\Psi$, then $\ab{\Psi}$ is a canonical path. By the truth lemma, for any formula $\psi$, we have $\cmodel,\ab{\Psi}\models \psi$ iff $\psi\in\Psi$. It follows that $\Psi$ is satisfiable, which leads to $\Phi \not\models \phi$.
\end{proof}

\section{Logics over \langcd}
\label{sec:logiccd}

In this section we study the logics with both the intersection and \unionplus modalities. The  language is set to be \langcd in this section, and by a ``logic'' without further explanation we mean one of \lcdk, \lcdd, \lcdt, \lcdb, \lcdsfour or \lcdsfive.

Compared with the characterization of intersection, that of transitive
closure of union is better behaved:
$$\ckk = \ckd = \ckt = \ckb =\cksfour = \cksfive = \{K\Union, \Union1, \Union2\}$$
These axioms are not new, as e.g., is found in \cite{FHMV1995},
although as far as we know they have not been studied as additions to
D and B in the literature. For simplicity we write \ck this set of
axioms. Additional axioms for \unionplus corresponding to specific frame conditions can be derived in specific logic systems. For instance, D\Union is a theorem of $\axd \oplus \ck$.

By adding to the axiomatization of a logic over \langd the characterization of \unionplus, we get a sound axiomatization for the corresponding logic over \langcd. To make it precise, we list the axiomatizations as follows:
$$\begin{array}{rcr@{\,\oplus\,}l}
\axcdk &=& \axdk & \ck\\
\axcdd &=& \axdd & \ck\\
\axcdt &=& \axdt & \ck\\
\axcdb &=& \axdb & \ck\\
\axcdsfour &=& \axdsfour & \ck\\
\axcdsfive &=& \axdsfive & \ck\\
\end{array}$$

In this section we will make extensive references to the names of logics and axiomatizations, and for simplicity we shall call a tuple $\sigma=\sig$ a \emph{signature}, when \logic is one of the logics \lcdk, \lcdd, \lcdt, \lcdb, \lcdsfour and \lcdsfive, \ax is the corresponding axiomatization for \logic, $\alpha$ is a formula of \langcd, and $\iota$ is an index such that every index occurring in $\alpha$ is a subset of $\iota$. Moreover, we write $\sigma(\lk)$ when the \logic in $\sigma$ is \lk, and similarly for $\sigma(\ld)$, $\sigma(\lt)$, $\sigma(\lb)$, $\sigma(\lsfour)$ and $\sigma(\lsfive)$.

\begin{definition}[closure]
Given a signature $\sigma=\sig$, the \emph{$\sigma$-closure}, denoted $cl(\sigma)$, is the minimal set of formulas satisfying the following conditions:
\begin{enumerate}
\item $\alpha\in cl(\sigma)$;
\item If $\phi\in cl(\sigma)$, then all the subformulas of $\phi$ are also in $cl(\sigma)$;
\item If $\phi$ does not start with a negation symbol and $\phi\in cl(\sigma)$, then $\neg\phi\in cl(\sigma)$;
\item For any $i\in\iota$, $\Int_{\{i\}}\phi \in cl(\sigma)$ if and only if $\Box_{i} \phi \in cl(\sigma)$;
\item For any $I$ and $J$ such that $I\subset J \subseteq \iota$, if $\Int_I\phi \in cl(\sigma)$ then $\Int_J \phi \in cl(\sigma)$;
\item\label{it:cl-union} For any $I$ and $J$ such that $I\subset J \subseteq \iota$, if $\Union_I\phi\in cl(\sigma)$ then $\{{\Int_J} {\Union_{I}} \phi \mid I\cap J \neq \emptyset\} \subseteq cl(\sigma)$.
\end{enumerate} 
\end{definition}

It is not hard to verify that $cl(\sigma)$ is finite and nonempty for any signature $\sigma$. Given $\sigma=\sig$, a set of formulas is said to be \emph{maximal \ax-consistent in $cl(\sigma)$}, if it is (i) a subset of $cl(\sigma)$, (ii) \ax-consistent and (iii) maximal in $cl(\sigma)$ (i.e., any proper superset which is a subset of $cl(\sigma)$ is inconsistent). We write $\MCS^\sigma$ for the set of all maximal \ax-consistent sets of formulas in $cl(\sigma)$.

Now we extend the canonical relations to the finitary case. Given a signature $\sigma$ and an index $I$, we may try to define a canonical relation $\rhd_I$ to be a binary relation on $\MCS^\sigma$, such that $\Phi \rhd_I \Psi$ iff for all $\phi$, $\Int_I\phi \in \Phi$ implies $\phi\in\Psi$, like we did for the logics over \langd. But there are subtleties regarding the closure. For example, transitivity may be lost in \lcdsfour, if $\Int_I\phi\in\Phi$ but ${\Int_I}{\Int_{I}}\phi\notin\Phi$ in case the latter is not included in the closure. We introduce the formal definition below.

\begin{definition}[finitary canonical relation]\label{def:fin-cr}
For a signature $\sigma=\sig$ and an index $I\subseteq \iota$, the \emph{canonical relation $\rhd_I$ for \logic} is a binary relation on $\MCS^\sigma$, such that the following hold for all $\Phi,\Psi\in\MCS^\sigma$:
\begin{itemize}
\item If \logic is \lcdk, \lcdd or \lcdt: $\Phi \rhd_I \Psi$ iff $\{\phi \mid \Int_I\phi\in\Phi\} \subseteq \Psi$;

\item If \logic is \lcdb: $\Phi \rhd_I \Psi$ iff $\{\phi \mid \Int_I\phi\in\Phi\} \subseteq \Psi$ and $\{\phi \mid \Int_I\phi\in\Psi\} \subseteq \Phi$;

\item If \logic is \lcdsfour: $\Phi \rhd_I \Psi$ iff $\{\Int_I\phi \mid \Int_I\phi\in\Phi\} \subseteq \{\Int_I\phi \mid \Int_I\phi\in\Psi\}$;%

\item If \logic is \lcdsfive: $\Phi \rhd_I \Psi$ iff $\{\Int_I\phi \mid \Int_I\phi\in\Psi\} = \{\Int_I\phi \mid \Int_I\phi\in\Psi\}$.
\end{itemize}
\end{definition}

Note that for all the logics, from $\Phi\rhd_I\Psi$ we still get that $\Int_I\phi\in\Phi$ implies $\phi\in\Psi$, as the criteria above are at least not weaker. We can get the following proposition that is similar to Proposition \ref{prop:cr}.

\begin{proposition}\label{prop:cr-2}
For any signature $\sigma=\sig$ and any index $I\subseteq \iota$, the canonical relation $\rhd_I$ on $\MCS^\sigma$ is:
\begin{enumerate}[leftmargin=3em]
\item\label{it:d-2} Serial, if $I$ is singleton and \ax is \axcdd;
\item\label{it:t-2} Reflexive, if \ax is \axcdt;
\item\label{it:b-2} Reflexive and symmetric, if \ax is \axcdb;
\item\label{it:s4-2} Reflexive and transitive, if \ax is \axcdsfour;
\item\label{it:s5-2} An equivalence relation, if \ax is \axcdsfive.
\end{enumerate}
Moreover,
\begin{enumerate}[resume,leftmargin=3em]
\item\label{it:inc-2} $\rhd_J \subseteq \rhd_I$, for any index $J$ such that $I\subseteq J\subseteq \iota$.
\end{enumerate}
\end{proposition}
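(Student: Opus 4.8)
The plan is to establish Proposition~\ref{prop:cr-2} item by item, mirroring the argument for its infinitary counterpart Proposition~\ref{prop:cr}, while exploiting the fact that the finitary relations of Definition~\ref{def:fin-cr} have been tailored so that the characteristic frame properties of the stronger logics are essentially built in. Concretely, for \lcdb the defining clause is already symmetric in $\Phi$ and $\Psi$, so the symmetry in item (3) is immediate; for \lcdsfour the clause is an inclusion between sets of $\Int_I$-formulas, so the transitivity in item (4) follows from transitivity of $\subseteq$; and for \lcdsfive the clause is an equality of such sets, so item (5) is an equivalence relation by the corresponding properties of $=$. The remaining ``positive'' properties — reflexivity (items (2)--(5)), seriality (item (1)), and the inclusion (item (6)) — are what actually require proof.

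Reflexivity is the easiest: for the \lcdt and \lcdb clauses it follows from $\text{T}\Int$, since $\Int_I\phi\in\Phi$ gives $\phi\in cl(\sigma)$ by subformula-closure and hence $\phi\in\Phi$ by maximality; for the \lcdsfour and \lcdsfive clauses $\Phi\rhd_I\Phi$ holds trivially, being the inclusion (resp.\ equality) of a set with itself. Seriality (item (1)) is the one case among (1)--(5) needing real work. With $I=\{i\}$ a singleton, I would first note that $\Int1$ lets me replace $\Int_{\{i\}}$ by $\Box_i$, so the singleton instance of $\text{D}\Int$ is derivable in \axcdd even though the general one is not. I then run a finitary existence argument: the set $\{\phi\in cl(\sigma)\mid\Int_I\phi\in\Phi\}$ is shown $\ax$-consistent using $\text{K}\Int$, $\text{N}\Int$ and the singleton D-axiom exactly as in the existence lemma, and is extended to some $\Psi\in\MCS^\sigma$ by a Lindenbaum construction inside the finite closure; by construction $\Phi\rhd_I\Psi$.

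The inclusion $\rhd_J\subseteq\rhd_I$ for $I\subseteq J$ (item (6)) is the heart of the matter, and its proof rests on $\Int2$ together with the fifth closure condition, which guarantees $\Int_J\phi\in cl(\sigma)$ whenever $\Int_I\phi\in cl(\sigma)$. For the \lcdk/\lcdd/\lcdt clause this is clean: if $\Int_I\phi\in\Phi$ then $\Int_J\phi\in\Phi$ by $\Int2$ (the formula lying in the closure), whence $\phi\in\Psi$ by $\Phi\rhd_J\Psi$; the \lcdb clause is handled the same way in both directions. I expect the genuine obstacle to be the \lcdsfour and \lcdsfive clauses, where the relation tracks the $\Int_I$-formulas themselves rather than their contents. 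There the naive bridge from $\Int_I\phi\in\Phi$ to $\Int_I\phi\in\Psi$ wants to pass through a nested formula such as $\Int_J\Int_I\phi$, obtained from $\Int_I\phi$ by $4\Int$ and $\Int2$ and collapsed back by $\text{T}\Int$, and such nested intersection formulas are not in general forced into $cl(\sigma)$ by the closure conditions. Making this step go through — either by a derivation that stays within the closure, or by leaning on the specific shape of the $4\Int$/$5\Int$ axioms — is exactly where the finitary setting departs from the infinitary proof of Proposition~\ref{prop:cr}.\ref{it:inc}, and it is the step I would spend the most care on.
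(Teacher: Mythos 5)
Your proposal follows essentially the same route as the paper's proof: symmetry, transitivity and the equivalence-relation property for the \lcdb, \lcdsfour and \lcdsfive clauses are read off the shape of Definition~\ref{def:fin-cr}, reflexivity comes from T\Int together with subformula closure of $cl(\sigma)$, and seriality is a finitary existence argument. In fact your seriality argument is more careful than the paper's: the paper only extends the single formula $\phi$ to a member of $\MCS^\sigma$, which yields, for each $\phi$ with $\Int_{\{i\}}\phi\in\Phi$, \emph{some} $\Psi$ containing $\phi$, whereas $\Phi\rhd_{\{i\}}\Psi$ requires one $\Psi$ containing all such $\phi$ simultaneously; your version (show $\{\phi\mid\Int_{\{i\}}\phi\in\Phi\}$ consistent using K\Int, N\Int and the singleton instance of D obtained via $\Int1$, then extend inside $cl(\sigma)$) is the argument that actually establishes seriality.

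The one point you flag but do not close --- item (\ref{it:inc-2}) for the \lcdsfour and \lcdsfive clauses --- is a genuine issue, and the paper's own proof is silent on it (item (\ref{it:inc-2}) is not discussed there at all). Your diagnosis is correct: from $\Int_I\phi\in\Phi$, axiom $\Int2$ and closure condition~5 give $\Int_J\phi\in\Phi$ and hence $\Int_J\phi\in\Psi$, but the implication $\Int_I\phi\ra\Int_J\phi$ points the wrong way to recover $\Int_I\phi\in\Psi$, which is what the \lcdsfour/\lcdsfive clauses demand, and $cl(\sigma)$ need not contain the nested formulas (such as $\Int_J\Int_I\phi$) that would mediate such a step. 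Note, however, that what the truth lemma actually uses is only the weaker consequence that $\Phi\rhd_J\Psi$ and $\Int_I\phi\in\Phi$ imply $\phi\in\Psi$; this does go through for every clause, since $\Int2$ and closure condition~5 give $\Int_J\phi\in\Phi$, the definition of $\rhd_J$ then yields $\phi\in\Psi$ (via T\Int and subformula closure in the \lcdsfour/\lcdsfive cases). So your instinct to spend the most care here is right: either the proposition should be weakened to this consequence, or the \lcdsfour/\lcdsfive case of the inclusion needs an argument that neither your proposal nor the paper supplies.
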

\begin{proof}
For the seriality when $I=\{i\}$: given $\Phi\in\MCS^\sigma$ and a formula $\phi$ such that $\Int_{\{i\}}\phi\in \Phi$, it suffices to show the existence of a $\Psi\in \MCS^\sigma$ such that $\phi\in\Psi$. This is easy, take $\phi$ and extend it to be \ax-maximal in $cl(\sigma)$, by observing that $\phi\in cl(\sigma)$.

For reflexivity, we make use of the axiom T\Int and the fact that $cl(\sigma)$ is closed under subformulas.

For the combinations of frame conditions for the axiomatizations \axcdb, \axcdsfour and \axcdsfive, we can see that they are enforced by the definition of the canonical relation.
\end{proof}

\begin{definition}[finitary canonical paths]
Given a signature $\sigma=\sig$, a \emph{canonical path} for \ax in $cl(\sigma)$ is a sequence $\ab{\Phi_0, I_0, \ldots, I_{n-1}, \Phi_n}$ such that:

(i) $\Phi_0, \ldots, \Phi_n \in \MCS^\sigma$, 

(ii) $I_0, \ldots, I_{n-1}\subseteq \iota$, and

(iii) for all $x=0, \ldots, n-1$, $(s_x,s_{x+1}) \in \rhd_{I_x}$. 

\noindent Initial segments, $\tail(s)$, $i$-paths, $I$-paths, and so on, are
defined exactly like for paths in a model
(Def. \ref{def:path}).
\end{definition}

\begin{definition}[standard relation]\label{def:st-rel-cd}
Given a signature $\sigma=\sig$, for any $i \in \iota$, the \emph{standard relation} $\crelation^\sigma_i$ is a binary relation on the canonical paths for \ax in $cl(\sigma)$, such that:
\begin{itemize}
\item When L is \lcdk or \lcdd: for all canonical paths $s$ and $t$ for \ax in $cl(\sigma)$, $(s,t)\in \crelation^{\sigma}_i$ iff $t$ extends $s$ with $\ab{I, \Phi}$ for $\Phi\in\MCS^{\sigma}$ and some index $I$ such that $i\in I\subseteq \iota$; 

\item When L is \lcdt: for all canonical paths $s$ and $t$ for \axcdt in $cl(\sigma)$, $(s,t)\in \crelation^{\sigma}_i$ iff $t = s$ or $t$ extends $s$ with $\ab{I, \Phi}$ for $\Phi\in\MCS^{\sigma}$ and some index $I$ such that $i\in I\subseteq \iota$; 

\item When L is \lcdb: for all canonical paths $s$ and $t$ for \axcdb in $cl(\sigma)$, $(s,t)\in \crelation^{\sigma}_i$ iff (i) $t = s$ or (ii) $s=\ab{t, I,\Phi}$ or (iii) $t = \ab{s, I, \Phi}$ for $\Phi\in\MCS^{\sigma}$ and some index $I$ such that $i\in I\subseteq \iota$; 

\item When L is \lcdsfour: for all canonical paths $s$ and $t$ for
  \axcdsfour in $cl(\sigma)$, $(s,t)\in \crelation^{\sigma}_i$ iff $s$
  is an initial segment of $t$ and $t\setminus s$ is a canonical $i$-path; 

\item When L is \lcdsfive: for all canonical paths $s$ and $t$ for \axcdsfive in $cl(\sigma)$, $(s,t)\in \crelation^{\sigma}_i$ iff 
(i) $s$ and $t$ have a common initial segment $u$, and
(ii) both $s \setminus u$ and $t \setminus u$ are canonical $i$-paths. 
\end{itemize}
\end{definition}

\begin{definition}[finitary standard models]
Given a signature $\sigma=\sig$, the \emph{standard model} for $\sigma$ is a tuple $\cmodel^{\sigma} = (\cdomain, \crelation, \cvaluation)$ such that:
\begin{itemize}
\item \cdomain is the set of all canonical paths for \ax in $cl(\sigma)$. The elements of \cdomain are called \emph{states} of $\cmodel^{\sigma}$.

\item ${\crelation} = {\crelation^{\sigma}}$.

\item For any propositional variable $p$, $\cvaluation(p) = \{ s \in \cdomain \mid p \in \tail(s)\}$.
\end{itemize}
\end{definition}

\begin{lemma}[standardness]\label{lem:st-cd}
Given a signature $\sigma = \sig$, the following hold:
\begin{enumerate}
\item $\cmodel^{\sigma}$ is a Kripke model;
\item $\cmodel^{\sigma}$ is a D model when $\logic = \lcdd$ and $\ax = \axcdd$;
\item $\cmodel^{\sigma}$ is a T model when $\logic = \lcdt$ and $\ax = \axcdt$;
\item $\cmodel^{\sigma}$ is a B model when $\logic = \lcdb$ and $\ax = \axcdb$;
\item $\cmodel^{\sigma}$ is an S4 model when $\logic = \lcdsfour$ and $\ax = \axcdsfour$;
\item $\cmodel^{\sigma}$ is an S5 model when $\logic = \lcdsfive$ and $\ax = \axcdsfive$.
\end{enumerate}
\end{lemma}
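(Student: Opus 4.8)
The plan is to check, logic by logic, that the standard relations $\crelation^{\sigma}_i$ (for each $i\in\iota$) satisfy the frame conditions defining the corresponding model class. The argument runs exactly parallel to the infinitary standardness lemma of Section~\ref{sec:logicd}, the only change being that the properties of the canonical relations $\rhd_I$ are now supplied by Proposition~\ref{prop:cr-2} rather than Proposition~\ref{prop:cr}. Item~1 is essentially free: $cl(\sigma)$ is finite and nonempty, so $\MCS^{\sigma}\neq\emptyset$ and there is at least one length-$1$ canonical path $\ab{\Phi}$; hence $\cdomain\neq\emptyset$, and since each $\crelation^{\sigma}_i$ is by construction a binary relation on $\cdomain$ and $\cvaluation$ is a well-defined valuation, $\cmodel^{\sigma}$ is a Kripke model.

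For the remaining items I would fix an arbitrary $i\in\iota$ and verify each required condition. Reflexivity (needed for T, B, S4 and S5) is read off the relevant clause of Definition~\ref{def:st-rel-cd}: the clauses for \lcdt and \lcdb explicitly admit $t=s$, while for \lcdsfour and \lcdsfive one notes $s\preceq s$ and that $s\setminus s=\ab{\tail(s)}$ is a length-$1$ path, hence trivially a canonical $i$-path. Seriality for \lcdd is the one place an existence argument is needed: by Proposition~\ref{prop:cr-2}.\ref{it:d-2} the relation $\rhd_{\{i\}}$ is serial, so given a state $s$ there is $\Phi\in\MCS^{\sigma}$ with $\tail(s)\rhd_{\{i\}}\Phi$, and then $t=\ab{s,\{i\},\Phi}$ is a state with $(s,t)\in\crelation^{\sigma}_i$. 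Symmetry for \lcdb is immediate from the definitional symmetry of its clause, since interchanging $s$ and $t$ merely swaps its cases (ii) and (iii) and leaves case (i) fixed.

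The transitivity arguments for \lcdsfour and \lcdsfive carry the real content, and S5 transitivity is the main obstacle. For \lcdsfour, if $(s,t),(t,u)\in\crelation^{\sigma}_i$ then $s\preceq t\preceq u$, whence $s\preceq u$, and $u\setminus s$ is the concatenation (at the shared node $\tail(t)$) of the canonical $i$-paths $t\setminus s$ and $u\setminus t$, hence itself a canonical $i$-path; thus $(s,u)\in\crelation^{\sigma}_i$. For \lcdsfive, suppose $(s,t)$ and $(t,r)$ hold, witnessed by common initial segments $u_1$ (of $s$ and $t$) and $u_2$ (of $t$ and $r$) with $s\setminus u_1$, $t\setminus u_1$, $t\setminus u_2$ and $r\setminus u_2$ all canonical $i$-paths. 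The crux is that $u_1$ and $u_2$ are both initial segments of the single path $t$, and the initial segments of a fixed path are linearly ordered by $\preceq$, so without loss of generality $u_1\preceq u_2$. Then $u_1\preceq u_2\preceq r$, so $u_1$ is a common initial segment of $s$ and $r$; moreover $r\setminus u_1$ decomposes as $u_2\setminus u_1$ followed by $r\setminus u_2$, where $u_2\setminus u_1$ is an initial segment of the $i$-path $t\setminus u_1$ and hence a canonical $i$-path, so $r\setminus u_1$ is a canonical $i$-path, and together with $s\setminus u_1$ this yields $(s,r)\in\crelation^{\sigma}_i$. The case $u_2\preceq u_1$ is handled analogously. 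Matching the conditions verified in each case against the definitions of the D, T, B, S4 and S5 model classes then gives items~2--6.
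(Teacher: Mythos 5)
The paper states this lemma without proof, and your verification is exactly the routine check it leaves implicit: non-emptiness of the domain from the finiteness and non-emptiness of $cl(\sigma)$, reflexivity and symmetry read off the clauses of Definition~\ref{def:st-rel-cd}, seriality from Proposition~\ref{prop:cr-2}, and the only substantive step---transitivity for \lcdsfour and \lcdsfive, where the key observation that the two witnessing initial segments $u_1,u_2$ both sit inside the single path $t$ and are therefore linearly ordered by $\preceq$ lets you take the shorter one as the common witness for $(s,r)$---is carried out correctly. The one condition you never explicitly discharge is symmetry of $\crelation^{\sigma}_i$ in the \lcdsfive case, but that is immediate because the defining clause is symmetric in $s$ and $t$, so the proof stands as the natural argument the authors intended.
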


\begin{lemma}[existence]
For any signature $\sigma$, any state $s$ of $\cmodel^{\sigma}$, and
any index $I \subseteq \iota$, suppose $\Int_I\phi, \Union_I\phi \in cl(\sigma)$. Then,
\begin{enumerate}
\item\label{it:ex-int} If $\Int_I\phi\notin\tail(s)$ then there is a state $t$ of $\cmodel^{\sigma}$ such that $(s,t) \in \bigcap_{i\in I}\crelation^\sigma_i$ and $\phi \notin \tail(t)$.
\item\label{it:ex-union} If $\Union_I\phi\notin\tail(s)$ then there is a state $t$ of $\cmodel^{\sigma}$ such that $(s,t) \in \biguplus_{i\in I}\crelation^\sigma_i$ and $\phi \notin\tail(t)$.
\end{enumerate}
\end{lemma}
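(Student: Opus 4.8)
The plan is to treat the two parts separately: part~\ref{it:ex-int} is the finitary counterpart of the existence lemma for the logics over \langd\ (Section~\ref{sec:logicd}), while part~\ref{it:ex-union} is the genuinely new step and is driven by the rule ($\Union$2). For part~\ref{it:ex-int}, note first that $\Int_I\phi\in cl(\sigma)$ forces $\phi\in cl(\sigma)$ by closure under subformulas, so $\Int_I\phi\notin\tail(s)$ gives $\neg\Int_I\phi\in\tail(s)$. I would then copy the earlier construction, forming $\{\psi\mid\Int_I\psi\in\tail(s)\}\cup\{\neg\phi\}$, restricting to its members lying in $cl(\sigma)$, and checking \ax-consistency with K\Int and N\Int. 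When \logic is \lcdb, \lcdsfour or \lcdsfive the relation $\rhd_I$ of Definition~\ref{def:fin-cr} carries extra clauses recording symmetry, transitivity or the equivalence, so I would enlarge the set with the corresponding $\Int_I$-formulas (and, for \lcdb, the backward witnesses) and discharge consistency using B\Int, 4\Int or 5\Int respectively. A Lindenbaum construction inside the finite set $cl(\sigma)$ yields $\Phi\in\MCS^\sigma$ with $\phi\notin\Phi$ and $\tail(s)\rhd_I\Phi$; extending $s$ by $\ab{I,\Phi}$ produces the required state $t$ with $(s,t)\in\bigcap_{i\in I}\crelation^\sigma_i$.

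For part~\ref{it:ex-union} I would argue by contradiction, assuming $\Union_I\phi\notin\tail(s)$ while $\phi\in\tail(t)$ for every state $t$ with $(s,t)\in\biguplus_{i\in I}\crelation^\sigma_i$. The first move is to descend to the level of maximal consistent sets. Writing $\to$ for the relation $\bigcup_{i\in I}\rhd_{\{i\}}$ on $\MCS^\sigma$ (equivalently, $\Phi\rhd_J\Psi$ for some $J$ with $I\cap J\neq\emptyset$, using Proposition~\ref{prop:cr-2}.\ref{it:inc-2}), I would check that the tails of the states reachable from $u$ in one $\crelation^\sigma_i$-step with $i\in I$ are exactly the $\to$-successors of $\tail(u)$, so that the tails of all union-reachable states form the set $W=\{\Psi\in\MCS^\sigma\mid\tail(s)\to^{+}\Psi\}$, with $\to^{+}$ the transitive closure. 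For \lcdk, \lcdd and \lcdt this is immediate; for \lcdb, \lcdsfour and \lcdsfive it uses that $\rhd_{\{i\}}$ already has the relevant frame property (Proposition~\ref{prop:cr-2}), so that the initial-segment, symmetric or forking shape of $\crelation^\sigma_i$ realizes precisely the $\to$-successors on tails. The standing assumption then reads: $\phi\in\Psi$ for every $\Psi\in W$.

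Put $W'=\{\tail(s)\}\cup W$ and, abbreviating by $\hat\Delta$ the conjunction of the finitely many formulas of $\Delta\in\MCS^\sigma$, set $\chi=\bigvee_{\Delta\in W'}\hat\Delta$. The crux is the derivable implication $\vdash\chi\ra\bigwedge_{i\in I}\Box_i(\chi\wedge\phi)$; granting it, the rule ($\Union$2) yields $\vdash\chi\ra\Union_I\phi$, and since $\widehat{\tail(s)}$ is a disjunct of $\chi$ and $\Union_I\phi\in cl(\sigma)$, maximality forces $\Union_I\phi\in\tail(s)$, a contradiction. To obtain the implication I fix $\Delta\in W'$ and $i\in I$ and use the finitary coherence fact $\vdash\hat\Delta\ra\Box_i\bigvee\{\hat{\Delta'}\mid\Delta\rhd_{\{i\}}\Delta'\}$, proved in the usual way from the provable total disjunction $\vdash\bigvee_{\Delta'\in\MCS^\sigma}\hat{\Delta'}$, from $\Int1$ (to move between $\Box_i$ and $\rhd_{\{i\}}$), and from the characteristic $\Int$-axiom of the logic. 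Since $i\in I$, every $\rhd_{\{i\}}$-successor of a member of $W'$ again lies in $W$, so each disjunct $\hat{\Delta'}$ entails $\chi$ and, by the standing assumption, entails $\phi$; hence $\vdash\bigvee\{\hat{\Delta'}\mid\Delta\rhd_{\{i\}}\Delta'\}\ra(\chi\wedge\phi)$, and normality upgrades the coherence fact to $\vdash\hat\Delta\ra\Box_i(\chi\wedge\phi)$. Running this over all $\Delta\in W'$ and $i\in I$ gives $\vdash\chi\ra\Box_i(\chi\wedge\phi)$ for each $i$, hence the desired conjunction.

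The main obstacle is the coherence fact together with the bookkeeping that surrounds it. Semantically $\Box_i$ is interpreted by $\crelation^\sigma_i$, which for \lcdsfour, \lcdsfive and \lcdb is not mere forward extension, so I must verify that the syntactic successors witnessed by $\rhd_{\{i\}}$ match the logic-specific relation of Definition~\ref{def:fin-cr}; this is exactly where 4\Int, 5\Int and B\Int (via $\Int1$) are used to exclude the spurious successors that minimal normal reasoning would otherwise allow. A secondary, purely combinatorial difficulty is the state-to-$\MCS$ reachability correspondence in the same three logics; here item~\ref{it:cl-union} of the closure is what keeps the formulas $\Int_J\Union_I\phi$ (for $I\cap J\neq\emptyset$) inside $cl(\sigma)$, so that reachability can be tracked without leaving the finite universe. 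Once these points are settled the argument closes uniformly across all six logics, since ($\Union$2) belongs to \ck and hence to every \ax considered here.
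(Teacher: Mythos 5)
Your proof is correct and follows essentially the same route as the paper's: part~1 by the Lindenbaum-style witness construction inside $cl(\sigma)$ followed by extending $s$ with $\ab{I,\Phi}$, and part~2 by forming the disjunction $\chi$ of the conjunctions $\hat{\Delta}$ over the relevant maximal consistent sets, deriving $\chi\ra\bigwedge_{i\in I}\Box_i(\chi\wedge\phi)$, and applying ($\Union$2). The remaining differences are cosmetic --- the paper takes the disjunction over the tails of \emph{all} states satisfying the reachability property rather than over the $\to^{+}$-closure of $\tail(s)$, and splits the premise of ($\Union$2) into $\psi\ra\Box_i\phi$ and $\psi\ra\Box_i\psi$ instead of using your coherence fact --- and your explicit handling of the stronger $\rhd_I$ clauses for \lcdb, \lcdsfour and \lcdsfive in part~1 is, if anything, more careful than the paper's.
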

\begin{proof}
Let $\sigma=\sig$ and $s$ be a state of $\cmodel^\sigma$. 

(\ref{it:ex-int})  Let $\Int_I\phi \notin \tail(s)$. So $\neg {\Int_I}\phi\in \tail(s)$. Consider the set $\Phi^- = \{-\phi\}\cup \{ \psi \mid \Int_I\psi \in \tail(s)\}$ (where $-\phi$ is $\psi$ if $\phi=\neg\psi$, and is $\neg\phi$ if $\phi$ is positive).  Clearly $\Phi^-\subseteq cl(\sigma)$ and it is not hard to show that it is \ax consistent. We can then extend it into a maximal consistent set $\Phi$ of formulas in $cl(\sigma)$. Since $-\phi\in \Phi$, $\phi\notin\Phi$. Let $t$ be $s$ extended with $\ab{I, \Phi}$. By definition it is clear that $\phi \notin \tail(t)$ and $(s,t) \in \bigcap_{i\in I}\crelation^\sigma_i$ (since $s \crelation^\sigma_i t$ for all $i\in I$).

(\ref{it:ex-union}) Let $\mcp$ be the property on the states of $\cmodel^\sigma$ such that for any $s$, $s\in \mcp$ iff for any $t$, if $(s,t)\in\biguplus_{i\in I}\crelation^\sigma_i$ then $\phi \in\tail(t)$. The equivalent condition is that for any state $s_0$ of $\cmodel^\sigma$, $s_0\in \mcp$ iff $\phi\in\tail(s_n)$ holds for any path $\ab{s_0,\{i_0\},\ldots,\{i_{n-1}\}, s_n}$ of $\cmodel^\sigma$ with $\{i_0,\ldots,i_{n-1}\}\subseteq I$. Let $\psi = \bigvee_{s\in\mcp} \widehat{\tail(s)}$ (where $\widehat{\tail(s)}$ is the conjunction of all formulas in $\tail(s)$). We get the following:

(a) For any $i\in I$, $\vdash_{\ax} \psi\ra \Box_i \phi$. First observe that for every $s_0\in\mcp$, any path $\ab{s_0,\{i_0\},\ldots,\{i_{n-1}\}, s_n}$ as described above is such that $\phi\in\tail(s_n)$. As a special case, for any state $s_1$, if $\ab{s_0,\{i\}, s_1}$ is a path, namely $\tail(s_0)\rhd_{\{i\}}\tail(s_1)$, then $\phi\in\tail(s_1)$. It follows that $\Box_i\phi \in \tail(s_0)$ (for otherwise it violates the first clause; just treat $\Box_i$ to be $\Int_{\{i\}}$). This means that $\Box_i\phi$ is a conjunct of every disjunct of $\psi$, and so $\vdash_{\ax} \psi\ra\Box_i\phi$.

(b) For any $i\in I$, $\vdash_{\ax} \psi\ra \Box_i \psi$. Suppose towards a contradiction that $\psi\wedge\neg\Box_i\psi$ is consistent. There must be a disjunct of $\psi$, say $\widehat{\tail(t)}$ (with $t\in\mcp$), such that $\widehat{\tail(t)}\wedge\neg\Box_i\psi$ is consistent. By properties of $\MCS^\sigma$ we have $\vdash_\ax \bigvee \{ \widehat\Phi \mid \Phi\in\MCS^\sigma\}$ (similarly $\widehat\Phi$ is the conjunction of formulas in $\Phi$).
So there must be $\Phi\in\MCS^\sigma\setminus\{\tail(s) \mid s\in\mcp\}$ such that $\widehat{\tail(t)}\wedge\neg\Box_i\neg \widehat \Phi$ is consistent. It follows that $\tail(t)\rhd_{\{i\}} \Phi$.
The path $u$ which extends $t$ with $\ab{\{i\},\Phi}$ is such that $(t,u)\in \crelation^\sigma_{\{i\}}$. Since $t\in\mcp$, we have $u\in\mcp$ as well. However, this conflicts with the fact that $\Phi\notin\{\tail(s) \mid s\in\mcp\}$.

Now we show the contraposition of the clause. Suppose $s\in\mcp$, and we must show $\Union_I\phi \in \tail(s)$. By (a) and (b) we have $\vdash_\ax \psi \ra \bigwedge_{i\in I}\Box_i(\psi\wedge\phi)$, and then by $\Union2$ we have $\vdash_\ax \psi \ra \Union_{I}\phi$. Let $\xi = \widehat{\tail (s)}$. It follows that $\vdash_{\ax} \xi\ra\psi$, as $\xi$ is one of the disjuncts of $\psi$. So we get $\vdash_\ax \xi\ra\Union_{I}\phi$, and so $\Union_I\phi \in \tail(s)$ for $\tail(s)$ is consistent.
\end{proof}

\begin{lemma}[truth]
Given a signature $\sigma$, a formula $\phi\in cl(\sigma)$, and a state $s$ of $\cmodel^{\sigma}$,
$$\cmodel^{\sigma},s\models \phi \qiff \phi \in \tail(s).$$
\end{lemma}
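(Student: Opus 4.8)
The plan is to prove the biconditional by induction on the structure of $\phi$, mirroring the proof of Lemma~\ref{lem:truth-d}, but keeping careful track that every formula whose membership in some tail I appeal to actually lies in $cl(\sigma)$ (this is what makes the induction hypothesis applicable and the maximal-consistency-in-$cl(\sigma)$ reasoning legitimate). The atomic case is immediate from the definition of $\cvaluation$, and the Boolean cases follow from the maximal \ax-consistency of $\tail(s)$ in $cl(\sigma)$ together with closure under subformulas and single negations. The case $\Box_i\psi$ I would reduce to the case $\Int_{\{i\}}\psi$ exactly as in Lemma~\ref{lem:truth-d}, using the axiom $\Int1$; this is sound in the finitary setting precisely because the closure clause linking $\Box_i$ and $\Int_{\{i\}}$ guarantees $\Box_i\psi\in cl(\sigma)$ iff $\Int_{\{i\}}\psi\in cl(\sigma)$.

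For $\Int_I\psi$ I would argue as before. The implication from satisfaction to membership follows, via the induction hypothesis, from the existence lemma part~(\ref{it:ex-int}) used contrapositively: were $\Int_I\psi\notin\tail(s)$, it would furnish a $\bigcap_{i\in I}\crelation^\sigma_i$-successor $t$ with $\psi\notin\tail(t)$, hence $\cmodel^\sigma,t\not\models\psi$. For the converse I would fix $t$ with $(s,t)\in\bigcap_{i\in I}\crelation^\sigma_i$ and run the same per-logic case analysis on the shape of $\crelation^\sigma$ as in Lemma~\ref{lem:truth-d}, now invoking Proposition~\ref{prop:cr-2} (in particular the inclusion~\ref{it:inc-2}, $\rhd_J\subseteq\rhd_I$, and the frame-condition clauses~\ref{it:t-2}--\ref{it:s5-2}) to conclude $\tail(s)\rhd_I\tail(t)$ and therefore $\psi\in\tail(t)$; the only extra bookkeeping is that $\psi\in cl(\sigma)$, which holds since $\psi$ is a subformula of $\Int_I\psi$.

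The genuinely new case is $\Union_I\psi$, and here the two directions are asymmetric. The implication from satisfaction to membership I would again obtain for free from the existence lemma part~(\ref{it:ex-union}): if $\Union_I\psi\notin\tail(s)$ then some $\biguplus_{i\in I}\crelation^\sigma_i$-successor $t$ has $\psi\notin\tail(t)$, which with the induction hypothesis contradicts $\cmodel^\sigma,s\models\Union_I\psi$. The work is all in the converse. Assuming $\Union_I\psi\in\tail(s)$, I would expand an arbitrary pair $(s,t)\in\biguplus_{i\in I}\crelation^\sigma_i$ into a finite chain $s=u_0\,\crelation^\sigma_{i_0}\,u_1\cdots\crelation^\sigma_{i_{k-1}}\,u_k=t$ with $k\ge 1$ and each $i_j\in I$, and prove by induction along the chain that $\Union_I\psi$ is carried to every $u_j$ and $\psi$ to every $u_{j+1}$. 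For a single step, the same relation analysis as in the $\Int$ case (taken with the singleton $\{i_j\}\subseteq I$) supplies an index $J_j\ni i_j$ with $\tail(u_j)\rhd_{J_j}\tail(u_{j+1})$; the axiom $\Union1$ gives $\vdash_\ax\Union_I\psi\ra\Box_{i_j}(\psi\wedge\Union_I\psi)$, and then $\Int1$ and $\Int2$ upgrade this to $\vdash_\ax\Union_I\psi\ra\Int_{J_j}\psi$ and $\vdash_\ax\Union_I\psi\ra\Int_{J_j}\Union_I\psi$, so that maximal consistency of $\tail(u_j)$ transports both $\psi$ and $\Union_I\psi$ across the edge.

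The hard part, and the place where the finitary setting really bites, is guaranteeing that the transported formulas $\Int_{J_j}\psi$ and $\Int_{J_j}\Union_I\psi$ actually belong to $cl(\sigma)$, since otherwise they cannot sit in the closure-bounded set $\tail(u_j)$ and the transport step becomes vacuous. This is exactly what the $\Union$-clause~\ref{it:cl-union} of the closure (a Fischer--Ladner-style condition) is there to secure, for the edge indices $J_j$ that meet $I$. I expect the most delicate point to be the logics \lcdk and \lcdd: there $\Union_I\psi\ra\psi$ is not derivable, so $\psi$ cannot simply be read off at the endpoint from $\Union_I\psi$ but must be delivered across the final edge, which is the step most sensitive to whether the closure carries the right $\Int_{J}\psi$ formulas. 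Once this closure bookkeeping is in place, combining the two directions yields the biconditional for $\Union_I\psi$ and completes the induction.
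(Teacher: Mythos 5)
Your overall strategy coincides with the paper's: induction on $\phi$, with the two parts of the existence lemma supplying the direction from satisfaction to membership for $\Int_I\psi$ and $\Union_I\psi$, the $\Box_i$ case reduced to $\Int_{\{i\}}\psi$ via $\Int1$ and the closure clause linking the two, the converse direction for $\Int_I\psi$ handled by the per-logic case analysis through Proposition~\ref{prop:cr-2}, and the converse direction for $\Union_I\psi$ handled by decomposing a $\biguplus_{i\in I}\crelation^\sigma_i$-step into single edges and pushing $\Union_I\psi$ along the chain using $\Union1$, $\Int1$, $\Int2$ and the fact that $\Int_{J}\Union_I\psi\in cl(\sigma)$. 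Up to that point your argument is essentially the paper's, and your bookkeeping of which formulas lie in $cl(\sigma)$ is, if anything, more explicit than the paper's.

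The gap is in the step you yourself single out as the hard part. To deliver $\psi$ (as opposed to $\Union_I\psi$) across an edge $\tail(u_j)\rhd_{J_j}\tail(u_{j+1})$ you need $\Int_{J_j}\psi\in\tail(u_j)$, hence $\Int_{J_j}\psi\in cl(\sigma)$; but clause~\ref{it:cl-union} of the closure only adds the formulas $\Int_J\Union_I\psi$, not $\Int_J\psi$, and no other clause produces $\Int_J\psi$ from $\Union_I\psi$ (subformula closure applied to $\Int_J\Union_I\psi$ yields $\Union_I\psi$ and $\psi$, never $\Int_J\psi$). So the transport of $\psi$ itself is not licensed by the definitions as they stand. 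For \lcdt, \lcdb, \lcdsfour and \lcdsfive this is harmless: there T\Union is derivable from $\Union1$ and T, so once $\Union_I\psi$ has been propagated to $\tail(t)$ you obtain $\psi\in\tail(t)$ directly from maximality, since $\psi\in cl(\sigma)$ by subformula closure; this is essentially how the paper concludes. For \lcdk and \lcdd, however, T\Union is not derivable, and you are right that $\psi$ must be delivered across the final edge; to make that step sound one has to strengthen the closure (for instance, require $\Int_J\psi\in cl(\sigma)$, or $\Int_J(\psi\wedge\Union_I\psi)\in cl(\sigma)$, whenever $\Union_I\psi\in cl(\sigma)$ and $J\cap I\neq\emptyset$). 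As written, your appeal to clause~\ref{it:cl-union} does not close this gap --- though you have correctly located the one point where the finitary construction needs more than the stated closure provides, a point the paper itself passes over by invoking T\Union even in the \lcdk and \lcdd cases.
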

\begin{proof}
The proof is by induction on $\phi$. The atomic and Boolean cases are
easy to show. The cases for the modalities $\Box_i$ ($i \in \indexes$) and $\Int_I$ ($I$ is an index) are not much different from those of the proof of Lemma \ref{lem:truth-d} (we need to be careful with the closure, however; just note that all the $i$'s and $I$'s used here are bounded by an $\iota$). Here we detail the case for $\Union_I\psi$.
$$\begin{array}{cll}
& \cmodel^{\sigma},s \models \Union_I\psi \\
\Lra& \text{for all $t$, if $(s,t)\in\biguplus_{i\in I} \crelation^\sigma_i$ then $\cmodel^\sigma,t\models\psi$}\\
\Lra& \text{for all $t$, if $(s,t)\in\biguplus_{i\in I} \crelation^\sigma_i$ then $\psi\in\tail(t)$}\\
\Ra& \Union_I\psi \in \tail(s) & \text{(existence lemma)}
\end{array}$$
For the converse direction of the last step, suppose $\Union_I\psi \in \tail(s)$ and towards a contradiction that there is a state $t$ such that $(s,t)\in \biguplus_{i\in I} \crelation^\sigma_i$ and $\psi\notin\tail(t)$. So there is a path $\ab{s_0, \{i_0\}, \ldots, \{i_{n-1}\}, s_n}$ of $\cmodel^\sigma$ such that $\{i_0,\ldots,i_{n-1}\}\subseteq I$, $s=s_0$ and $t = s_{n}$.
\begin{itemize}
\item If L is \lcdk or \lcdd, it must be that $t$ extends $s$ with $\ab{J_0, \Phi_1, \ldots, J_{n-1}, \Phi_n}$ where $\psi\notin\Phi_n$ and for each $x$, $i_x \in J_x$ and $\Phi_x\in\MCS^{\sigma}$. By definition ${\tail(s_0)\rhd_{J_0} \Phi_1 \rhd_{J_1} \cdots \rhd_{J_{n-1}} \Phi_n}$. By the axioms $\Union1$, $\Int1$ and $\Int2$ we can get $\vdash_\ax \Union_{I}\psi \ra {\Int_{J_0}}{\Union_{I}}\psi$, and since ${\Int_{J_0}}{\Union_{I}}\psi\in cl(\sigma)$ we have ${\Union_{I}}\psi \in\Phi_1$. Carrying this out recursively we get ${\Union_{I}}\psi \in\Phi_n$ and so $\psi\in\Phi_n$ by T\Union, which contradicts $\psi\notin\tail(t)$.

\item If L is \lcdt, we face an extra case compared with the above, namely $s =t$. A contradiction can be achieved by applying the axiom T\Union.

\item If L is \lcdb, there are three cases: (i) $s_{x+1} = s_x$ or (ii) $s_{x}=\ab{s_{x+1}, J,\Phi}$ or (iii) $s_{x+1} = \ab{s_x, J, \Phi}$ where $J\supseteq I$ and $\Phi\in\MCS^{\sigma}$. In all cases, by similar reasoning to the above (for case (ii) we use the symmetric condition for $\rhd_I$), we can show that $\psi\in \tail(s_{x+1})$ given $\Union_I\psi\in\tail(s_x)$, and then reach a contradiction similarly.

\item If L is \lcdsfour, $s_x$ ($0\leq x<n$) must be an initial
  segment of $s_{x+1}$ and $s_{x+1}\setminus s_x$ is a finitary canonical $i_x$-path ($i_x\in I$). By the axioms $\Union1$ and $\Int1$, $\vdash_{\axcdsfour} \Union_I\psi \ra \Int_{\{i_x\}} {\Union_I} \psi$.  So we get ${\Union_I}\psi\in\tail(s_{x+1})$ (we use T\Union in the case when $s=t$). Recursively carrying this out, we get $\Union_I\psi\in\tail(t)$, and so $\psi\in\tail(t)$ which leads to a contradiction.

\item If L is \lcdsfive, then $s_x$ and $s_{x+1}$ have a common
  initial segment $u$, and $s_x\setminus u$ and $s_{x+1}\setminus u$
  are both finitary canonical $i_x$-paths. Since $\vdash_{\axcdsfive} \Union_I\psi \ra \Int_{\{i_x\}}\psi$ (for all $i_x$), $\Int_I\psi\in s_0$, and by the definition of $\rhd_I$, $\Int_I \psi\in s_x$ for all $x$, so  $\psi\in\tail(t)$ which leads to a contradiction as well.
\end{itemize}
\end{proof}

\begin{theorem}[weak completeness]
Let \ax be the corresponding axiomatization introduced for a logic $\logic\in\{\lcdk, \lcdd, \lcdt, \lcdb, \lcdsfour, \lcdsfive\}$. For any $\phi\in\langcd$, if $\models \phi$, then $\vdash_{\ax}\phi$.
\end{theorem}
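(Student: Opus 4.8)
The plan is to argue by contraposition, exactly as in the strong completeness proof for the logics over \langd, but now using the finitary standard model $\cmodel^{\sigma}$ together with its truth lemma. Suppose $\nvdash_{\ax}\phi$; the goal is to exhibit a model in the class matching \logic (K, D, T, B, S4 or S5) that refutes $\phi$, which immediately yields $\not\models\phi$, i.e. the contrapositive of the claim.

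First I would assemble a signature around $\phi$. Since $\nvdash_{\ax}\phi$, the singleton set $\{\neg\phi\}$ is \ax-consistent. Only finitely many indices occur in $\phi$, so I can take a single index $\iota$ containing all of them (picking any member of $\indexes$ to keep $\iota$ nonempty in the degenerate propositional case) and set $\sigma=(\logic,\ax,\neg\phi,\iota)$; this is a legitimate signature, since every index occurring in $\neg\phi$ is then a subset of $\iota$. By clause (1) of the closure definition $\neg\phi\in cl(\sigma)$, and hence by the subformula condition (clause (2)) $\phi\in cl(\sigma)$ as well, so the truth lemma will be applicable to $\phi$ itself — this is the one bookkeeping point that must be checked before the rest goes through mechanically.

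Next I would build the falsifying state. Extend $\{\neg\phi\}$ to a set $\Phi$ that is maximal \ax-consistent in $cl(\sigma)$; this is possible because $\neg\phi\in cl(\sigma)$ and $cl(\sigma)$ is finite, so the maximization terminates and $\Phi\in\MCS^{\sigma}$. The length-one sequence $\ab{\Phi}$ is trivially a canonical path for \ax in $cl(\sigma)$, i.e. a state $s$ of $\cmodel^{\sigma}$ with $\tail(s)=\Phi$. Since $\neg\phi\in\Phi$ and $\Phi$ is consistent, $\phi\notin\Phi=\tail(s)$, so the truth lemma gives $\cmodel^{\sigma},s\not\models\phi$.

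Finally, the standardness lemma (Lemma \ref{lem:st-cd}) guarantees that $\cmodel^{\sigma}$ lies in exactly the class of models corresponding to \logic, so $\cmodel^{\sigma}$ witnesses $\not\models\phi$ and the contrapositive is established. I do not expect a genuine obstacle at this stage: the theorem is the harvesting step, with all the substance front-loaded into the standardness, existence and truth lemmas. The truly delicate work lies earlier — above all in the existence lemma for \Unionplus modalities, where one must build the disjunction $\psi$ over states satisfying the relevant property and invoke $\Union2$; by comparison the completeness statement itself only needs the observations that $\phi$ falls inside $cl(\sigma)$ and that a valid signature can be formed for an arbitrary $\phi$ by taking $\iota$ to be (a finite nonempty superset of) the indices appearing in $\phi$.
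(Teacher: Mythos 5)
Your proof is correct and follows essentially the same route as the paper's: build the signature $\sigma=(\logic,\ax,\neg\phi,\iota)$ with $\iota$ covering the indices of $\phi$, extend $\{\neg\phi\}$ to a maximal consistent set $\Phi$ in $cl(\sigma)$, take the one-element canonical path $\ab{\Phi}$ as a state of $\cmodel^{\sigma}$, and apply the truth and standardness lemmas to conclude $\not\models\phi$. Your additional bookkeeping (checking $\phi\in cl(\sigma)$ and the nonemptiness of $\iota$) is sound and, if anything, slightly more careful than the paper's own write-up.
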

\begin{proof}
Suppose $\nvdash_{\ax}\phi$. It follows that $\{\neg\phi\}$ is \ax consistent. Extend it to be a maximal set $\Phi$ in $cl((\logic,\ax,\neg\phi,\iota))$ with $\iota$ the union of all the indices occurring in $\phi$, then $\ab{\Phi}$ is a canonical path for \ax in $cl((\logic,\ax,\neg\phi,\iota))$. By the truth lemma, for any formula $\psi$, we have $\cmodel^{(\logic,\ax,\neg\phi,\iota)},\ab{\Phi}\models \phi$ iff $\phi\in\Phi$. It follows that $\Phi$ is satisfiable, which leads to $\not\models \phi$.
\end{proof}

\section{Discussion}
\label{sec:con}

We focused mainly on the completeness proof for the modal logics, K,
D, T, B, S4 and S5, extended with intersection and with or without the
transitive closure of union, but the method applies to many canonical
multi-modal logics (including many of those normal modal logics
between K and S5) with the intersection modality. By avoiding the
model translation processes used in the unraveling-folding method and
building a standard model directly, the proofs we present are
dramatically simpler than those found in the literature. We believe
that the readers who are familiar with the canonical model method for
completeness proofs of modal logics will find the proofs very
familiar and straightforward.

While our approach is inspired by simplifying the existing proof
technique, the standard model we build is not identical to the model
produced by the unraveling-folding processes: it is simpler because we
do not have to use so-called reductions of paths. We emphasize,
however, that the unraveling-folding method was still important for us
to arrive at this proof technique:
it explains why we take (finitary) canonical paths to be the states of the
standard model. Further work that could be interesting is to show the bisimilarity of the model we build to that by the unraveling-folding processes.

We omit the details here, but our method can be applied directly to many other
logics extended with intersection modalities, including popular
systems of epistemic and doxastic logics such as S4.2, S4.3, S4.4 and
KD45.

Finally, it is worth mentioning that our proof technique is slightly
more general than existing proofs in that it allows a (countably)
infinite set of boxes. This slightly complicates the proofs in the
cases with transitive closure of the union, requiring the use of the
$\sigma$ signatures.

\bibliographystyle{splncs04}
\bibliography{gbel}

\end{document}